\def \VersionLong {}
\def \VersionFinal {}
	\newcommand{\LongVersion}[1]{\ifdefined\VersionWithComments{\color{red!40!black}#1}\else#1\fi}
	\newcommand{\ShortVersion}[1]{\ifdefined\VersionWithComments{\color{black!40}#1}\fi}
	\newcommand{\LongVersion}[1]{\ifdefined\VersionWithComments{\color{black!40}#1}\fi}
	\newcommand{\ShortVersion}[1]{\ifdefined\VersionWithComments{\color{red!40!black}#1}\else#1\fi}
	\newcommand{\FinalVersion}[1]{#1}
	\newcommand{\AnonymousVersion}[1]{}
	\newcommand{\FinalVersion}[1]{}
	\newcommand{\AnonymousVersion}[1]{#1}
  \providecommand\BibTeX{{%
    \normalfont B\kern-0.5em{\scshape i\kern-0.25em b}\kern-0.8em\TeX}}}
\newenvironment{ienumeration}
	{\ifdefined\VersionLong\begin{enumerate}\else\begin{inparaenum}[\itshape i\upshape)]\fi}
	{\ifdefined\VersionLong\end{enumerate}\else\end{inparaenum}\fi}
\definecolor{darkblue}{rgb}{0.0,0.0,0.6}
\definecolor{darkgreen}{rgb}{0, 0.5, 0}
\definecolor{darkpurple}{rgb}{0.7, 0, 0.7}
\definecolor{darkblue}{rgb}{0, 0, 0.7}
\crefname{line}{\text{line}}{\text{lines}} 
\crefname{assumption}{\text{Assumption}}{\text{Assumptions}} 
\tikzstyle{every node}=[initial text=]
\tikzstyle{location}=[rectangle, rounded corners, minimum size=12pt, draw=black, fill=blue!10, inner sep=2pt]
\tikzstyle{final}=[double]
\tikzstyle{accepting}=[final]
\tikzstyle{PTPMOPT}=[,dashed,color=red,semithick]
\definecolor{coloract}{rgb}{0.50, 0.70, 0.30}
\definecolor{colorclock}{rgb}{0.4, 0.4, 1}
\definecolor{colorconst}{rgb}{0.50, 0.20, 0.00}
\definecolor{colordisc}{rgb}{1, 0, 1}
\definecolor{colorloc}{rgb}{0.4, 0.4, 0.65}
\definecolor{colorparam}{rgb}{1, 0.6, 0.0}
\newif\iftikzgnuplot
\pgfplotsset{compat=1.12}
\tikzset{
region/.style={
rectangle,
rounded corners,
draw=black,very thick
},
accepting/.style={double distance=2pt}
}
\newcommand{\N}{{\mathbb{N}}}
\newcommand{\R}{{\mathbb{R}}}
\newcommand{\ttrue}{\mathrm{t{\kern-1.5pt}t}}
\newcommand{\ffalse}{\mathrm{f{\kern-1.5pt}f}}
\newcommand{\powerset}[1]{\mathcal{P}({#1})}
\newcommand{\setdiff}{\mathrel{\triangle}}
\newcommand{\figcaption}[1]{\def\@captype{figure}\caption{#1}}
\newcommand{\tblcaption}[1]{\def\@captype{table}\caption{#1}}
\setlist[itemize]{label=\textbullet}
\newcommand{\defProblem}[3]
{%
\noindent\fcolorbox{black}{blue!15}{
%
%
	\begin{minipage}{.95\columnwidth}
		\textbf{#1:}\\
		\textsc{Input}: #2\\
		\textsc{Problem}: #3
	\end{minipage}
}
%
%
}
\theoremstyle{acmplain}
\newtheorem{mytheorem}{Theorem}[section]
\newtheorem{mycorollary}[mytheorem]{Corollary}
\theoremstyle{acmdefinition}
\newtheorem{mydefinition}[mytheorem]{Definition}
\newcommand{\gennote}[3]{\todo[inline,linecolor=#2,backgroundcolor=#2!25,bordercolor=#2]{#3: #1}}
\newcommand{\mw}[1]{\gennote{#1}{orange}{MW}}
\newcommand{\instructions}[1]{{\gennote{\bfseries #1}{red}{Instructions}}}
\newcommand{\abs}[1]{|#1|}
\newcommand{\propTrace}{\pi}
\newcommand{\PropTrace}{(\powerset{\AP})^{\infty}}
\newcommand{\FinPropTrace}{(\powerset{\AP})^{*}}
\newcommand{\InfPropTrace}{(\powerset{\AP})^{\omega}}
\newcommand{\IVar}{X}
\newcommand{\OVar}{Y}
\newcommand{\ovar}{y}
\newcommand{\signal}{\sigma}
\newcommand{\dval}{u}
\newcommand{\signalWithInside}{\signal=\dval_0,\dval_1,\cdots,\dval_{n-1}}
\newcommand{\signalWithInfInside}{\signal=\dval_0,\dval_1,\cdots}
\newcommand{\FinISignal}{(\R^\IVar)^{*}}
\newcommand{\OSignal}{(\R^\OVar)^{\infty}}
\newcommand{\FinOSignal}{(\R^\OVar)^{*}}
\newcommand{\InfOSignal}{(\R^\OVar)^{\omega}}
\newcommand{\stlFml}{\varphi}
\newcommand{\ltlFml}{\psi}
\newcommand{\satisfy}[3]{(#1,#2) \models #3}
\newcommand{\notsatisfy}[3]{\left(#1,#2 \right) \not\models #3}
\newcommand{\sem}[1]{\llbracket #1 \rrbracket} 
\newcommand{\supSem}[1]{\llbracket #1 \rrbracket_{\Diamond}} 
\newcommand{\infSem}[1]{\llbracket #1 \rrbracket_{\square}}
\newcommand{\prop}{p}
\newcommand{\AP}{\mathbf{AP}}
\newcommand{\UntilOp}[1]{\mathbin{\mathcal{U}_{#1}}}
\newcommand{\DiaOp}[1]{\Diamond_{#1}}
\newcommand{\BoxOp}[1]{\square_{#1}}
\newcommand{\NextOp}{\mathcal{X}}
\newcommand{\interval}[1][i,j]{[#1)}
\newcommand{\constant}{c}
\newcommand{\robust}[3]{{ \rho(#1,#2,#3) }}
\newcommand{\RoSI}[3]{\mathrm{RoSI}(#1, #2, #3)}
\newcommand{\finRobust}[3]{{ [\rho](#1,#2,#3) }}
\newcommand{\word}{w}\newcommand{\A}{\mathcal{A}}
\newcommand{\Alphabet}{\Sigma}
\newcommand{\OAlphabet}{\Gamma}
\newcommand{\lts}{\mathcal{M}}
\newcommand{\loc}{l}
\newcommand{\Loc}{L}
\newcommand{\initLoc}{\loc_0}
\newcommand{\Lg}{\mathcal{L}}
\newcommand{\LgFin}{\mathcal{L}^{\mathrm{fin}}}
\newcommand{\LgIn}{\Lg_{\mathrm{in}}}
\newcommand{\LgOut}{\Lg_{\mathrm{out}}}
\newcommand{\inChar}{a}
\newcommand{\outChar}{b}
\newcommand{\ioChar}[1][]{(\inChar_{#1},\outChar_{#1})}
\newcommand{\Transition}{\Delta}
\newcommand{\ltsInside}{(\Loc, \initLoc, \Transition)}
\newcommand{\ltsWithInside}{\lts = \ltsInside}
\newcommand{\subseq}[3]{#1[#2,#3]}
\newcommand{\project}[2]{{\mathbf{pr}_{#2}(#1)}}
\newcommand{\actualModel}{\mathcal{M}}
\newcommand{\learnedModel}{\tilde{\mathcal{M}}}
\newcommand{\ICommand}{\Alphabet^*}
\newcommand{\icommand}{\iota}
\newcommand{\icommandInside}{\inChar_1,\inChar_2,\ldots,\inChar_n}
\newcommand{\icommandWithInside}{\icommand = \icommandInside}
\newcommand{\IMapper}{\mathcal{I}}
\newcommand{\OMapper}{\mathcal{O}}
\newcommand{\population}{I}
\newcommand{\initPopulation}{\mathrm{genPopul}}
\newcommand{\isTimeout}{\mathrm{isTimeout}}
\newcommand{\generateNextPopulation}{\mathrm{genNextPopulation}}
\newcommand{\unfalsified}{\mathit{notFalsified}}
\newcommand{\breach}{\textsc{Breach}}
\newcommand{\staliro}{\textsc{S-TaLiRo}}
\newcommand{\falcaun}{\textsf{FalCAuN}}
\newcommand{\ourtool}{\falcaun}
\newcommand{\ga}{\textsc{GA}}
\newcommand{\hc}{\textsc{HC}}
\newcommand{\random}{\textsc{Random}}
 	\definecolor{colorok}{RGB}{80,80,150}
	\definecolor{colorok}{RGB}{0,0,0}
\newcommand{\eg}{\textcolor{colorok}{e.g.,}\xspace}
\newcommand{\ie}{\textcolor{colorok}{i.e.,}\xspace}
\begin{document}

\title{Falsification of Cyber-Physical Systems with Robustness-Guided Black-Box Checking}
\author{Masaki Waga}
\authornote{JSPS Research Fellow}
\orcid{0000-0001-9360-7490}
\affiliation{%
  \institution{National Institute of Informatics and the Graduate University for Advanced Studies}
  \streetaddress{2-1-2 Hitotsubashi}
  \city{Chiyoda}
  \state{Tokyo}
  \postcode{101-8430}
  \country{Japan}}
\email{mwaga@nii.ac.jp}

\renewcommand\shortauthors{Masaki, W.}

\begin{abstract}
For exhaustive formal verification, industrial-scale \emph{cyber-physical systems (CPSs)} are often too large and complex, and lightweight alternatives (\eg{} monitoring and testing) have attracted the attention of both industrial practitioners and academic researchers.  \emph{Falsification} is one popular testing method of CPSs utilizing \emph{stochastic optimization}. In state-of-the-art falsification methods, the result of the previous falsification trials is discarded, and we always try to falsify without any prior knowledge. To concisely memorize such prior information on the CPS model and exploit it, we employ \emph{Black-box checking (BBC)}, which is a combination of \emph{automata learning} and \emph{model checking}. Moreover, we enhance BBC using the \emph{robust semantics} of STL formulas, which is the essential gadget in falsification. Our experiment results suggest that our robustness-guided BBC outperforms a state-of-the-art falsification tool.
\end{abstract}

%
%
\begin{CCSXML}
<BC's2012>
<concept>
<concept_id>10010520.10010553</concept_id>
<concept_desc>Computer systems organization~Embedded and cyber-physical systems</concept_desc>
<concept_significance>500</concept_significance>
</concept>
<concept>
<concept_id>10010520.10010570.10010573</concept_id>
<concept_desc>Computer systems organization~Real-time system specification</concept_desc>
<concept_significance>100</concept_significance>
</concept>
<concept>
<concept_id>10011007.10011074.10011099.10011692</concept_id>
<concept_desc>Software and its engineering~Formal software verification</concept_desc>
<concept_significance>300</concept_significance>
</concept>
<concept>
<concept_id>10011007.10011074.10011784</concept_id>
<concept_desc>Software and its engineering~Search-based software engineering</concept_desc>
<concept_significance>500</concept_significance>
</concept>
</Ac's2012>
\end{CCSXML}

\ccsdesc[500]{Computer systems organization~Embedded and cyber-physical systems}
\ccsdesc[100]{Computer systems organization~Real-time system specification}
\ccsdesc[300]{Software and its engineering~Formal software verification}
\ccsdesc[500]{Software and its engineering~Search-based software engineering}

%
%

\keywords{cyber-physical systems,
falsification,
black-box checking,
automata learning,
model checking,
signal temporal logic,
robust semantics}

\maketitle

\ifdefined \VersionWithComments
	\textcolor{red}{\textbf{This is the version with comments. To disable comments, comment out line~3 in the \LaTeX{} source.}}
\fi

\instructions{HSCC: Maximum 10 pages, 10pt font, two-column ACM format.}


\section{Introduction}\label{section:introduction}

\paragraph{Falsification of cyber-physical systems}
Due to their safety-critical nature, safety assurance of \emph{cyber-physical systems (CPSs)} is a vital problem.
For \emph{exhaustive} formal verification, \eg{} reachability analysis, industrial-scale cyber-physical systems (CPSs) are often too large and complex. Therefore \emph{non-exhaustive} but lightweight alternatives (\eg{} \emph{monitoring} and \emph{black-box testing}) have attracted the attention of both industrial practitioners and academic researchers. 
 \emph{Optimization-based falsification} is one of the \emph{search-based testing} methods to find bugs in CPSs, and many algorithms~\cite{DBLP:conf/hybrid/NghiemSFIGP10,DBLP:conf/case/DokhanchiYHF17,DBLP:conf/nfm/DreossiDS17,DBLP:journals/tcad/ZhangESAH18,DBLP:conf/adhs/YaghoubiF18,DBLP:conf/cav/ZhangHA19} have been studied. The problem is formulated as follows. 

\defProblem{The falsification problem}{a CPS model $\actualModel$ (given an input signal $\signal$, it returns an output signal $\actualModel(\signal)$) and a specification $\stlFml$ of\LongVersion{ the CPS model} $\actualModel$.}{Find a violating input signal $\signal$ such that the corresponding output signal $\actualModel(\signal)$ violates\LongVersion{ the specification} $\stlFml$\LongVersion{ \ie{} $\actualModel(\signal) \not\models \stlFml$}}

The technical essence of optimization-based falsification is to reduce CPS safety assurance to the \emph{simulation-based optimization} problem through the \emph{robust semantics}~\cite{DBLP:journals/tcs/FainekosP09} of \emph{signal temporal logic (STL)} formulas~\cite{DBLP:conf/formats/MalerN04}. The robust semantics of an STL formula shows a \emph{quantitative} satisfaction degree: if the robust semantics of an STL formula $\stlFml$ is negative, $\stlFml$ is violated. 
Thus, the falsification problem can be solved by minimizing the robust semantics of the given STL formula $\stlFml$ using an optimization technique, \eg{} covariance matrix adaptation evolution strategy (CMA-ES)~\cite{DBLP:conf/cec/AugerH05}, through simulations. 
The analysis of differential equations tends to be expensive, and falsification often finds a bug more efficiently than formal verification of CPSs such as reachability analysis\LongVersion{ of hybrid automata}.

Thanks to the robust semantics of STL, optimization-based falsification often falsifies an STL formula effectively even if it is hard for a random testing.
Falsification usually requires many simulations to find a violating input signal.
This can be a problem due to the simulation cost of CPSs.
A simulator of a self-driving car---involving the obstacles (\eg{} pedestrians and other cars) and road conditions as well as the ego car---typically runs in a speed that is more or less real-time. 
A single simulation of it thus would take several seconds, at least.
Thus, we want to reduce the number of the simulations.

\paragraph{Black-box checking}
\begin{figure}[tbp]
 \centering
 \begin{tikzpicture}[shorten >=1pt,node distance=4cm,on grid,auto] 
  \small
  \node[initial above,rectangle,draw](learning)[align=center] {Automata learning of\\ black-box system $\actualModel$\\(\eg{} L*~\cite{DBLP:journals/iandc/Angluin87} or TTT~\cite{DBLP:conf/rv/IsbernerHS14})};
  \node[rectangle,draw,node distance=2.4cm](MC)[below=of learning, align=center] {Verify if\\ $\learnedModel \models \stlFml$ \\by \\model checking};
  \node[node distance=3.0cm,rectangle,draw](equivalence)[left=of MC,align=center] {Check if\\ $\actualModel \simeq \learnedModel$\\ (typically by\\ testing)};
  \node[node distance=1.5cm](maybesafe)[below=of equivalence,align=right] {Deems $\actualModel \models \stlFml$};
  \node[node distance=3.3cm,rectangle,draw](cexcheck)[right=of MC,align=center] {Test if\\ $\actualModel \not\models \stlFml$ is \\witnessed\\ by $\signal$};
  \node[node distance=1.5cm](falsified)[below=of cexcheck,align=center] {$\actualModel \not\models \stlFml$ witnessed by $\signal$\qquad\qquad};
 \path[->] 
 (learning) edge node[align=center] {Learn \\a Mealy machine $\learnedModel$} (MC)
 (MC) edge node[above] {$\learnedModel \models\stlFml$} (equivalence)
 (equivalence) edge[bend left] node[above=1.4,align=center,pos=0.1] {$\actualModel \neq \learnedModel$\\ witnessed by $\signal$} (learning)
 (equivalence) edge node[right,align=center] {Deems $\actualModel = \learnedModel$} (maybesafe)
 (MC) edge node[above] {$\learnedModel \not\models\stlFml$} node[below,align=center] {witnessed \\ by $\signal$} (cexcheck)
 (cexcheck) edge[bend right] node[above right,align=center,pos=0.8] {No\\ ($\actualModel \neq \learnedModel$ is\\ witnessed by $\signal$)} (learning)
 (cexcheck) edge node[left] {Yes} (falsified);
 \end{tikzpicture}
 \caption{A workflow of black-box checking~\cite{DBLP:journals/jalc/PeledVY02,DBLP:conf/dagstuhl/Meinke16}.}
 \label{fig:bbc}
\end{figure}

\emph{Black-box checking (BBC)}~\cite{DBLP:journals/jalc/PeledVY02} or \emph{learning-based testing (LBT)}~\cite{DBLP:conf/dagstuhl/Meinke16} is another testing method of black-box systems.
The speciality of BBC is the combination of \emph{automata learning}~\cite{DBLP:conf/sfm/2011} and \emph{model checking}~\cite{DBLP:books/daglib/0020348}.
As described in~\cite{DBLP:journals/jalc/PeledVY02},
an outline of BBC is shown in \cref{fig:bbc}.
Here, a black-box system $\actualModel\colon \ICommand \to \FinPropTrace$ 
is a function from a discrete input sequence $\icommand \in \ICommand$ to a sequence $\actualModel (\icommand) \in \FinPropTrace$ of the sets of atomic propositions satisfied at each time.
By automata learning, a \emph{Mealy machine} $\learnedModel$ is constructed from the previous simulation results (the top box of \cref{fig:bbc}).
The learned Mealy machine $\learnedModel$ is used to approximate the black-box system $\actualModel$.
By model checking, one checks if the learned Mealy machine $\learnedModel$ satisfies the given property $\stlFml$ (bottom center of \cref{fig:bbc}).
Since the behavior of the black-box system $\actualModel$ and the learned Mealy machine $\learnedModel$ can be different, their consistency is confirmed through additional simulations of $\actualModel$ (bottom left and right of \cref{fig:bbc}).
We note that 
the learned Mealy machine $\learnedModel$ is independent of the property $\stlFml$,
and we can use\LongVersion{ the learned Mealy machine} $\learnedModel$ for model checking of properties other than $\stlFml$.

Thanks to the soundness of \emph{conformance testing} used as \emph{equivalence testing} (left of \cref{fig:bbc}),
BBC can guarantee that the given black-box system certainly satisfies the given property~\cite{DBLP:journals/jalc/PeledVY02} although the soundness relies on additional assumptions on the black-box system (\eg{} the upper bound of the number of the states).
A recent survey~\cite{DBLP:conf/dagstuhl/HowarS16} reports that
at the early stage of the automata learning,
it is beneficial 
for the equivalence testing
to try to find a counterexample $\icommand \in \ICommand$ satisfying $\actualModel(\icommand) \neq \learnedModel(\icommand)$ instead of 
trying to prove the equivalence by conformance testing, \eg{} W-method~\cite{DBLP:journals/tse/Chow78} and Wp-method~\cite{DBLP:journals/tse/FujiwaraBKAG91}.
\emph{Random testing} is one typical choice of the equivalence testing other than conformance testing.
Random testing usually samples the inputs uniformly, and it is good at covering various inputs.
But, due to its uniform nature, random testing is not good at finding \emph{rare} counterexamples existing only in a small area of the input space.

\paragraph{Robustness-guided black-box checking}
Our contribution is to combine optimization-based falsification and BBC aiming at the improvement of both of them.
We enhance BBC by the robust semantics of STL, which is the essential gadget in\LongVersion{ optimization-based} falsification. We utilize BBC to solve the falsification problem.

As an improvement of BBC,
we employ the robust semantics of STL to enhance the search of a counterexample exploiting the following observation.
If the CPS $\actualModel$ violates the given STL formula $\stlFml$, but the learned Mealy machine $\learnedModel$ satisfies $\stlFml$,
there exists a discrete input $\icommand \in \ICommand$ such that 
the output $\actualModel(\icommand)$ of $\actualModel$ violates the specification $\stlFml$, and
we have $\actualModel(\icommand) \neq \learnedModel(\icommand)$.
Minimizing the robust semantics of $\stlFml$,
our equivalence testing of $\actualModel$ and $\learnedModel$ focuses on a subspace of the input space where a counterexample more likely exists.
To minimize the robust semantics,
we use, \eg{} hill climbing or \emph{genetic algorithms}~\cite{DBLP:books/daglib/0070933}.

As an improvement of optimization-based falsification, 
we aim at reducing the number of the simulations 
when we try to falsify a CPS over multiple STL formulas.
Multiple STL formulas are used in falsification, \eg{} because for one abstract requirement in engineers' minds, many STL formulas realize it, and we want to try some STL formulas out of them.
Through the automata learning in BBC, we reuse the knowledge on the CPS obtained when falsifying other STL formulas, and reduce the number of the simulations.
See \cref{section:related_works} for related works on model learning for falsification.

Another big problem of optimization-based falsification is that we can obtain very small information when we failed to falsify it.
Since BBC generates a learned Mealy machine $\learnedModel$ even if the given specifications are not falsified, we can potentially use it to explain why the BBC failed.

We note that the existing robust semantics, \eg{}~\cite{DBLP:journals/tcs/FainekosP09,DBLP:conf/formats/DonzeM10,DBLP:conf/cav/AkazakiH15}, are incompatible with the finite semantics of LTL in~\cite{DBLP:conf/cav/dAmorimR05}, which is implemented in LTSMin~\cite{DBLP:conf/tacas/KantLMPBD15}.
Although the novelty is limited,
we define and employ a suitable robust semantics of STL with a soundness and correctness theorem.

We implemented a prototypical tool \ourtool{} for robustness-guided BBC and compared its performance with: 
\begin{inparaenum}[\itshape i\upshape)]
 \item \breach{}, which is one of the state-of-the-art falsification tools; and
 \item a baseline BBC using random search for the equivalence testing.
\end{inparaenum}
Our experimental result suggests that 
\begin{ienumeration}
 \item on average, robustness-guided BBC using \emph{genetic algorithm} falsifies more properties \breach{} and the baseline BBC method; and
 \item robustness-guided BBC is much more scalable than \breach{} with respect to the number of the properties we try to falsify.
\end{ienumeration}

\paragraph{Contributions}

Our contributions are summarized as follows.

\begin{itemize}
 \item By combining optimization-based falsification and black-box checking (BBC),
       we proposed robustness-guided BBC to improve both of them.
 \item We implemented a prototypical tool \ourtool{} for robustness-guided BBC.
 \item Our experimental results show that our robustness-guided BBC outperforms baseline BBC and one of the state-of-the-art falsification algorithms.
\end{itemize}

\paragraph{Organization}
After reviewing some preliminaries in \cref{section:preliminaries}, we show the robust semantics of STL in a \emph{discrete-time} setting in \cref{section:discrete_stl}.
This semantics is compatible with the finite semantics of LTL in~\cite{DBLP:conf/cav/dAmorimR05}.
In \cref{section:bbc-cps}, we show how to enhance BBC by the robust semantics of STL, which is the main contribution of this paper.
We show our experimental evaluation in \cref{section:experiments}.
We review some related works in \cref{section:related_works}.
We conclude and show future works in \cref{section:conclusions_and_future_work}.

\section{Preliminaries}\label{section:preliminaries}

\paragraph{Notations}
For a set $X$, we denote its powerset by $\powerset{X}$.
We denote the empty sequence by $\varepsilon$.
For a set $X$, 
an infinite sequence $\overline{x} = x_0, x_1,\dots \in X^{\omega}$ of $X$, and
$i,j \in N$ satisfying $i \leq j$, 
we denote the subsequence $x_i,x_{i+1}, \dots,x_j \in X^{*}$ by $\subseq{\overline{x}}{i}{j}$.
For a set $X$, 
a finite sequence $\overline{x} \in X^{*}$ of $X$, and
an infinite sequence $\overline{x'} \in X^{\omega}$ of $X$,
we denote their concatenation by $\overline{x} \cdot \overline{x'}$.
For a set $X$ and its subsets $X', X'' \subseteq X$, we denote the \emph{symmetric difference} of $X'$ and $X''$ by
$X' \setdiff X'' = \{x  \in X \mid x \in X', x \not\in X''\} \cup \{x  \in X \mid x \not\in X', x \in X''\}$.
For a function $f\colon X\to Y$ and 
a finite sequence $\overline{x} = x_1,x_2,\dots,x_n \in X^{*}$, 
we let $\overline{f}\colon X^* \to Y^*$ as $\overline{f}(\overline{x}) = f(x_1),f(x_2),\dots,f(x_n)$.
For closed intervals $I_1=[a_1,b_1], I_2=[a_2,b_2]$ over $\R \cup \{\pm\infty\}$, 
we let
 $-I_1 = [-b_1, -a_1]$ and
 $\max(I_1,I_2) = [\max(a_1,a_2), \max(b_1,b_2)]$ and
 $\min(I_1,I_2) = [\min(a_1,a_2), \min(b_1,b_2)]$.

\subsection{LTL model checking}

\emph{Linear temporal logic (LTL)}~\cite{DBLP:conf/focs/Pnueli77} is a commonly used formalism to describe temporal behaviors of an infinite or finite sequence $\propTrace \in \PropTrace$ of a set $\propTrace_i \subseteq \AP$ of atomic propositions representing valuations of atomic propositions.

\begin{mydefinition}
 [linear temporal logic]
 \label{def:ltl}
 For the  set $\AP$ of the atomic propositions, the syntax of \emph{linear temporal logic (LTL)} is defined as \LongVersion{follows}\ShortVersion{$\ltlFml, \ltlFml' ::= \top \mid \prop \mid \neg \ltlFml \mid \ltlFml \lor \ltlFml' \mid \ltlFml \UntilOp{[i,j)} \ltlFml' \mid \NextOp \ltlFml$}, where $\prop \in \AP$ and $i,j \in \N \cup \{+\infty\}$ satisfying $i \leq j$.\footnote{In the standard definition of LTL, the interval $[i,j)$ in $\UntilOp{[i,j)}$ is always $[0,\infty)$ and it is omitted. We employ the current syntax to emphasize the similarity to STL. We note that this does not change the expressive power.}\LongVersion{
\[
 \ltlFml, \ltlFml' ::= \top \mid \prop \mid \neg \ltlFml \mid \ltlFml \lor \ltlFml' \mid \ltlFml \UntilOp{[i,j)} \ltlFml' \mid \NextOp \ltlFml
\]}

For an LTL formula $\ltlFml$, an \emph{infinite} sequence $\propTrace = \propTrace_0, \propTrace_1,\dots \in \InfPropTrace$ of subsets of atomic propositions, and $k \in \N$, 
we define the satisfaction relation $\satisfy{\propTrace}{k}{\ltlFml}$ as follows.
\begin{align*}
 \satisfy{\propTrace}{k}{\top} \qquad&
 \satisfy{\propTrace}{k}{\prop} \iff \prop \in \propTrace_k\\
 \satisfy{\propTrace}{k}{\neg \ltlFml} \iff& (\propTrace, k) \not\models \ltlFml\\
 \satisfy{\propTrace}{k}{\ltlFml \lor \ltlFml'} \iff& \satisfy{\propTrace}{k}{\ltlFml} \lor \satisfy{\propTrace}{k}{\ltlFml'}\\
 \satisfy{\propTrace}{k}{\NextOp{\ltlFml}} \iff& \satisfy{\propTrace}{k+1}{\ltlFml}\\
 \satisfy{\propTrace}{k}{\ltlFml \UntilOp{[i,j)} \ltlFml'} \iff&
 \exists l \in [k+i,k+j).\,\satisfy{\propTrace}{l}{\ltlFml'} \\
 \qquad\qquad\qquad\qquad\land& \forall m \in \{k,k+1,\dots,l\}.\, \satisfy{\propTrace}{m}{\ltlFml}
\end{align*}
\end{mydefinition}

We denote $\propTrace \models \ltlFml$ if we have $\satisfy{\propTrace}{0}{\ltlFml}$.
An LTL formula $\ltlFml$ is \emph{safety} if 
for any infinite sequence $\propTrace \in \InfPropTrace$ 
satisfying $\propTrace \not\models \ltlFml$,
there exists $i \in \N$, such that 
for any $j > i$ and 
for any $\propTrace' \in \InfPropTrace$, we have  
$\subseq{\propTrace}{0}{j} \cdot \propTrace' \not\models \ltlFml$.
For a safety LTL formula $\ltlFml$, the violation of $\ltlFml$ can be monitored by a \emph{finite} prefix $\subseq{\propTrace}{0}{i} \in \FinPropTrace$ of $\propTrace \in \InfPropTrace$.
In~\cite{DBLP:conf/cav/dAmorimR05}, the finite semantics $\sem{\ltlFml}$ of LTL $\ltlFml$ is defined by the set of finite prefixes potentially satisfying the property $\ltlFml$.
We note that this semantics is also utilized in the latest version of LTSMin.

\begin{mydefinition}
 [finite semantics of LTL~\cite{DBLP:conf/cav/dAmorimR05}]
 For an LTL formula $\ltlFml$, $\sem{\ltlFml} \subseteq \FinPropTrace$
 is the following set of \emph{finite} sequences of subsets of atomic propositions
\[
 \sem{\ltlFml} = \{\propTrace \in \FinPropTrace \mid \exists \propTrace' \in \InfPropTrace.\, \propTrace\cdot\propTrace' \models \ltlFml \}
\]
\end{mydefinition}

\begin{mydefinition}
 [Mealy machine]
 \label{def:mealy_machine}
 For the input and output alphabet $\Alphabet$ and $\OAlphabet$,
 a \emph{Mealy machine} is a tuple $\ltsWithInside$, where
 $\Loc$ is the finite set of locations,
 $\initLoc \in \Loc$ is the initial location, and
 $\Transition\colon (\Loc \times \Alphabet) \to (\OAlphabet \times \Loc) $ is the transition function.
\end{mydefinition}

For a Mealy machine $\ltsWithInside$ over $\Alphabet$ and $\OAlphabet$, the \emph{language}
$\Lg(\lts) \subseteq \bigl(\Alphabet \times \OAlphabet\bigr)^{\omega}$ is 
$\Lg(\lts) = \{\ioChar[0],\ioChar[1],\dots \mid \exists \loc_1,\loc_2,\dots, \forall i \in \N.\, \Transition(\loc_i,\inChar_i) = (\outChar_i,\loc_{i+1})\}$.
For an infinite signal $\signal = \ioChar[0],\ioChar[1],\dots \in \bigl(\Alphabet \times \OAlphabet\bigr)^{\omega}$, we let
$\project{\signal}{1} = \inChar_0,\inChar_1,\dots \in \Alphabet^{\omega}$ and
$\project{\signal}{2} = \outChar_0,\outChar_1,\dots \in \OAlphabet^{\omega}$.
For a Mealy machine $\lts$,
the \emph{input language}
$\LgIn(\lts) \subseteq \Alphabet^{\omega}$ and
the \emph{output language}
$\LgOut(\lts) \subseteq \OAlphabet^{\omega}$ are 
$\LgIn(\lts) = \{\project{\signal}{1} \mid \exists \signal \in \Lg(\lts)\}$ and 
$\LgOut(\lts) = \{\project{\signal}{2} \mid \exists \signal \in \Lg(\lts)\}$.
We employ a Mealy machine over $\Alphabet$ and $\powerset{\AP}$ to model a system.

\begin{mydefinition}
 [LTL model checking] 
 Let $\Alphabet$ be the input alphabet and let $\AP$ be the set of the atomic propositions.
 Given 
 an LTL formula $\ltlFml$ over $\AP$ and
 a Mealy machine $\lts$ over $\Alphabet$ and $\powerset{\AP}$, \emph{LTL model checking} decides if we have
 \begin{math}
  \forall \propTrace \in \LgOut(\lts).\, \propTrace \models \ltlFml
 \end{math}.
 Moreover, it answers $\signal \in \Lg(\lts)$ satisfying $\project{\signal}{1} \not\models \ltlFml$ if such $\signal$ exists.
 We denote $\forall \propTrace \in \LgOut(\lts). \propTrace \models \ltlFml$ by $\lts \models \ltlFml$.
\end{mydefinition}

In the rest of this paper, we only consider \emph{safety} LTL formulas~\cite{DBLP:journals/fmsd/KupfermanV01}.
For any safety LTL formula $\ltlFml$, if we have $\lts \not\models \ltlFml$, there is a finite counterexample $\signal \in \bigl(\Alphabet \times \powerset{\AP}\bigr)^*$ such that 
$\project{\signal}{2} \in \FinPropTrace \setminus \sem{\ltlFml}$ and
there exists $\signal' \in \bigl(\Alphabet \times \powerset{\AP}\bigr)^{\omega}$ satisfying $\signal \cdot \signal' \in \Lg(\lts)$.
Thus, we use such a \emph{finite} counterexample $\signal\LongVersion{ \in \bigl(\Alphabet \times \powerset{\AP}\bigr)^*}$ as a witness of $\lts \not\models \ltlFml$.
We let $\LgFin(\lts) = \bigl\{\signal \in \bigl(\Alphabet \times \powerset{\AP}\bigr)^{*} \mid \exists \signal' \in \bigl(\Alphabet \times \powerset{\AP}\bigr)^{\omega}.\, \signal \cdot \signal' \in \Lg(\lts) \bigr\}$.

\subsection{Active automata learning}
\label{subsec:automata_learning}
\emph{Active automata learning} is an automata learning method pioneered by L* algorithm~\cite{DBLP:journals/iandc/Angluin87}, which learns the minimal DFA $\A_{\Lg}$ over $\Alphabet$ recognizing the target language $\Lg \subseteq \Alphabet^*$.
L* algorithm learns a DFA through the queries to \emph{membership} and \emph{equivalence} oracles.
Given a word $\word \in \Alphabet^*$, the membership oracle answers if\LongVersion{ $\word$ belongs to the target language $\Lg$ \ie{}} $\word \in \Lg$.
Given a candidate DFA $\A$, the equivalence oracle answers if\LongVersion{ $\A$ recognizes the target language $\Lg$ \ie{}} $\Lg(\A) = \Lg$,
where $\Lg(\A)$ is the language of the candidate DFA $\A$.
When $\A$ does not recognize $\Lg$,
the equivalence oracle also answers a counterexample $\word \in \Alphabet^*$ such that $\word \in \Lg(\A) \setdiff \Lg$.
We note that a Mealy machine $\lts$ can also be learned similarly\LongVersion{. See \eg}~\cite{DBLP:conf/sfm/SteffenHM11}.

\paragraph{Equivalence testing}
In practice, the target language $\Lg$ is usually given as a black-box system, and
a sound and complete equivalence oracle is often \emph{unimplementable} while
the given black-box system itself can be a membership oracle.
Therefore, we need an \emph{approximate} strategy for equivalence \emph{testing}.
For example, LearnLib~\cite{DBLP:conf/cav/IsbernerHS15} implements
\emph{deterministic exploration} (\eg{} complete, depth-bounded exploration),
\emph{random exploration} (\eg{} random words testing), and
\emph{conformance tests} (\eg{} W-method~\cite{DBLP:journals/tse/Chow78} and Wp-method~\cite{DBLP:journals/tse/FujiwaraBKAG91}).

\paragraph{Alphabet abstraction}
Another practical issue is that the input and output alphabet can be \emph{huge} or even \emph{infinite}, and the automata learning algorithm does not perform effectively or does not terminate.
For instance, the input and output of a CPS model is usually real-valued signals, which are infinitely many.
To overcome this issue, \emph{alphabet abstraction} is employed to reduce the alphabet size.
For example, a variant of Mealy machines is used to map the concrete and large alphabet to the abstract and small alphabet in~\cite{DBLP:journals/fmsd/AartsJUV15}.

\subsection{Black-box checking}

\emph{Black-box checking (BBC)}~\cite{DBLP:journals/jalc/PeledVY02}, is a black-box testing method\LongVersion{\footnote{Under some assumption, BBC is sound \ie{} BBC proves the correctness of the black-box system. See \eg~\cite{DBLP:journals/isse/MeijerP19}. However, the soundness assumption does not hold in most of the CPS application, and we use BBC just as a testing method. See also \cref{section:related_works}.}}
combining model checking and active automata learning.
Given a black-box and potentially infinite locations Mealy machine $\actualModel$ over $\Alphabet$ and $\powerset{\AP}$, and a safety LTL formula $\ltlFml$, 
BBC deems $\actualModel \models \ltlFml$ or returns a counterexample
$\signal \in \bigl(\Alphabet \times \powerset{\AP}\bigr)^*$ such that we have 
$\project{\signal}{2} \in \FinPropTrace \setminus \sem{\ltlFml}$ and
there exists $\signal' \in \bigl(\Alphabet \times \powerset{\AP}\bigr)^{\omega}$ satisfying $\signal \cdot \signal' \in \Lg(\lts)$.
In contrast to the usual testing methods, BBC also constructs a Mealy machine $\learnedModel$ through automata learning.
Thus, we can reuse some part of the previous testing results through the extracted Mealy machine $\learnedModel$.

\cref{fig:bbc} shows a workflow of BBC.
First, we learn a Mealy machine $\learnedModel$ from the black-box system $\actualModel$ by an automata learning algorithm \eg{} L*~\cite{DBLP:journals/iandc/Angluin87} or TTT algorithm~\cite{DBLP:conf/rv/IsbernerHS14}.
We note that the learned Mealy machine $\learnedModel$ may behave differently from the original\LongVersion{ black-box} system $\actualModel$ because our equivalence testing is an approximation, or even the equivalence testing might be omitted at this point.
Then, we check if we have $\learnedModel \models \ltlFml$ by model checking.
If we have $\learnedModel \not\models \ltlFml$, we also obtain a counterexample $\signal \in \bigl(\Alphabet \times \powerset{\AP}\bigr)^{*}$.
We feed the counterexample $\signal$ to the original system $\actualModel$ and check if $\signal$ is a witness of $\actualModel \not\models \ltlFml$, too.
If $\signal$ is a witness of $\actualModel \not\models \ltlFml$, we conclude $\actualModel \not\models \ltlFml$ and return the counterexample $\signal$.
Otherwise, we have $\signal \in \LgFin(\learnedModel)$ but $\signal \not\in \LgFin(\actualModel)$, and we use $\signal$ to refine our learning of $\learnedModel$.
If we have $\learnedModel \models \ltlFml$, we check if the behavior of $\actualModel$ and $\learnedModel$ are similar enough by equivalence testing.
If we find a counterexample $\signal \in \LgFin(\actualModel) \setdiff \LgFin(\learnedModel)$, we conclude that the learned Mealy machine $\learnedModel$ is not similar enough to the original system $\actualModel$, and
we use $\signal$ to refine our learning of $\learnedModel$.
If we could not find such $\signal$, we deem $\learnedModel$ to be equivalent to $\actualModel$  and return $\actualModel \models \ltlFml$, which is not always correct.

\section{Discrete-time signal temporal logic and robustness}
\label{section:discrete_stl}
\emph{Signal temporal logic (STL)}~\cite{DBLP:conf/formats/MalerN04} is a formalism to represent behavior of \emph{continuous-time}, real-valued signals
with quantitative satisfaction degree called \emph{robust semantics}~\cite{DBLP:conf/formats/DonzeM10}.
Due to the discrete nature of BBC, we need to represent \emph{discrete-time}, real-valued signals.
In this section, we introduce \emph{discrete-time} STL, which is a variant of LTL for real-valued signals.\LongVersion{ We define the robust semantics for both infinite and finite signals.}

\begin{mydefinition}
 [signal] 
 For a finite set $\OVar$ of variables, a \emph{signal} $\signal \in \OSignal$ is a (finite or infinite) sequence of valuations $\dval_i\colon \OVar \to \R$.
 For a finite signal $\signal \in \FinOSignal$,
 we denote the length $n$ of $\signalWithInside$ by $\abs{\signal}$.
\end{mydefinition}

\begin{mydefinition}
 [signal temporal logic]
 For a finite set $\OVar$ of variables, the syntax of \emph{signal temporal logic (STL)} is defined as follows, where $\ovar \in \OVar$, ${\bowtie} \in \{>,<\}$, $\constant \in \R$, and $i,j \in \N \cup \{+\infty\}$ satisfying $i < j$.
\[
 \stlFml, \stlFml' ::= \top \mid \ovar \bowtie \constant \mid \neg \stlFml \mid \stlFml \lor \stlFml' \mid \stlFml \UntilOp{\interval} \stlFml' \mid \NextOp \stlFml
\]
We use the following standard notation:
 $\bot \equiv \neg \top$;
 $y \geq c \equiv \neg (y < c)$;
 $y \leq c \equiv \neg (y > c)$;
 $\stlFml \land \stlFml' \equiv \neg ((\neg \stlFml) \lor (\neg\stlFml'))$;
 $\stlFml \Rightarrow \stlFml' \equiv (\neg \stlFml) \lor \stlFml'$;
 $\top \UntilOp{} \stlFml \equiv \top \UntilOp{\interval[0,\infty]} \stlFml$;
 $\DiaOp{\interval} \stlFml \equiv \top \UntilOp{\interval} \stlFml$; and
 $\BoxOp{\interval} \stlFml \equiv \neg (\DiaOp{\interval} \neg \stlFml)$.
\end{mydefinition}

For an STL formula $\stlFml$ over $\OVar$, an \emph{infinite} signal $\signalWithInfInside \in \InfOSignal$ over $\OVar$, and
 $k \in \N$, the satisfaction relation $\satisfy{\signal}{k}{\stlFml}$ is inductively defined as follows.
\begin{align*}
 \satisfy{\signal}{k}{\top} \qquad&
 \satisfy{\signal}{k}{\ovar > \constant} \iff \dval_k(\ovar) > \constant\\
 \satisfy{\signal}{k}{\ovar < \constant} \iff& \dval_k(\ovar) < \constant\\
 \satisfy{\signal}{k}{\neg \stlFml} \iff& \notsatisfy{\signal}{k}{\stlFml}\\
 \satisfy{\signal}{k}{\stlFml \lor \stlFml'} \iff& \satisfy{\signal}{k}{\stlFml} \lor \satisfy{\signal}{k}{\stlFml'}\\
 \satisfy{\signal}{k}{\NextOp{\stlFml}} \iff& \satisfy{\signal}{k+1}{\stlFml}\\
 \satisfy{\signal}{k}{\stlFml \UntilOp{\interval} \stlFml'} \iff& 
 \exists l \in \interval[k+i,k+j].\,\satisfy{\signal}{l}{\stlFml'} \\
 \land& \forall m \in \{k,k+1,\dots,l\}.\, \satisfy{\signal}{m}{\stlFml}
\end{align*}

The satisfaction relation $\satisfy{\signal}{k}{\stlFml}$ gives a \emph{qualitative} verdict of the satisfaction of the STL formula $\stlFml$ by the signal $\signal$.
The \emph{robust semantics} $\robust{\stlFml}{\signal}{k}$ gives a \emph{quantitative} satisfaction degree of the STL formula $\stlFml$ by the signal $\signal$.

\begin{mydefinition}
 [robust semantics]
 For an STL formula $\stlFml$ over $\OVar$, an \emph{infinite} signal $\signalWithInfInside \in \InfOSignal$ over $\OVar$, and $k \in \N$, the \emph{robust semantics} $\robust{\stlFml}{\signal}{k} \in \R \cup \{\pm\infty\}$ of the STL formula $\stlFml$ and the signal $\signal$ at $k$ is defined as follows.
 \begin{align*}
 \robust{\top}{\signal}{k} =& +\infty\qquad
 \robust{\NextOp \stlFml}{\signal}{k} =  \robust{\stlFml}{\signal}{k+1}\\
 \robust{\ovar > \constant}{\signal}{k} =&  \dval_k(\ovar) - \constant\quad
 \robust{\ovar < \constant}{\signal}{k}  = -\dval_k(\ovar) + \constant\\
 \robust{\neg \stlFml}{\signal}{k} =& -\robust{\stlFml}{\signal}{k}\\
 \robust{\stlFml \lor \stlFml'}{\signal}{k} =& \max(\robust{\stlFml}{\signal}{k}, \robust{\stlFml'}{\signal}{k})\\
 \robust{\stlFml \UntilOp{\interval} \stlFml'}{\signal}{k} =&\\
  \sup_{l \in \interval[k + i,k + j]}& \min\bigl(\robust{\stlFml'}{\signal}{l}, \min_{m \in \{k,k+1,\dots,l\}}\robust{\stlFml}{\signal}{m} \bigr)
\end{align*}
\end{mydefinition}

\begin{mytheorem}
 [soundness and completeness]
 \label{theorem:sound_correct_robust_infinite}
 For an STL formula $\stlFml$ over $\OVar$, an \emph{infinite} signal $\signalWithInfInside \in \InfOSignal$ over $\OVar$, and $k \in \N$ we have the following. 
 \begin{displaymath}
  \robust{\stlFml}{\signal}{k} > 0 \Rightarrow \satisfy{\signal}{k}{\stlFml} \quad
  \satisfy{\signal}{k}{\stlFml}  \Rightarrow \robust{\stlFml}{\signal}{k} \geq 0
 \end{displaymath}\qed
\end{mytheorem}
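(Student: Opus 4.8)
The plan is to prove both implications simultaneously by structural induction on $\stlFml$. The two statements cannot be treated in isolation because of negation: the clause $\robust{\neg\stlFml}{\signal}{k} = -\robust{\stlFml}{\signal}{k}$ together with $\satisfy{\signal}{k}{\neg\stlFml} \iff \notsatisfy{\signal}{k}{\stlFml}$ means that the soundness implication for $\neg\stlFml$ is exactly the contrapositive of the completeness implication for $\stlFml$, and symmetrically the completeness implication for $\neg\stlFml$ is the contrapositive of soundness for $\stlFml$. So I would run a single induction whose hypothesis is the conjunction of both implications for every proper subformula, and discharge both implications in each case.

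The base cases and the purely Boolean/temporal connectives are mechanical. For $\top$ we have $\robust{\top}{\signal}{k} = +\infty > 0$ and $\satisfy{\signal}{k}{\top}$ holds unconditionally, so both implications are immediate. For $\ovar > \constant$ the robustness is $\dval_k(\ovar) - \constant$, which is $>0$ precisely when $\dval_k(\ovar) > \constant$ and is $\geq 0$ whenever $\dval_k(\ovar) > \constant$; the case $\ovar < \constant$ is symmetric. The $\NextOp$ case only shifts the index from $k$ to $k+1$ and then invokes the induction hypothesis. For disjunction I use $\robust{\stlFml\lor\stlFml'}{\signal}{k} = \max(\robust{\stlFml}{\signal}{k},\robust{\stlFml'}{\signal}{k})$: a strictly positive maximum forces one disjunct to be positive, hence satisfied by the hypothesis, while a satisfied disjunct has nonnegative robustness and therefore a nonnegative maximum. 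The negation case is handled by reading off the contrapositives as described above.

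The only delicate case is the until operator, where $\robust{\stlFml \UntilOp{\interval} \stlFml'}{\signal}{k} = \sup_{l \in \interval[k+i,k+j]} \min\bigl(\robust{\stlFml'}{\signal}{l}, \min_{m \in \{k,\dots,l\}} \robust{\stlFml}{\signal}{m}\bigr)$. For soundness I would argue that if this supremum is strictly positive, then some index $l$ in the window $\interval[k+i,k+j]$ (nonempty since $i<j$) already makes the inner $\min$ strictly positive: this is the one step that genuinely needs an argument, namely that a supremum exceeding $0$ is witnessed by a term exceeding $0$ — otherwise every term would be $\le 0$ and so would the supremum. A strictly positive inner $\min$ gives $\robust{\stlFml'}{\signal}{l} > 0$ and $\robust{\stlFml}{\signal}{m} > 0$ for every $m \in \{k,\dots,l\}$, and the induction hypothesis then yields $\satisfy{\signal}{l}{\stlFml'}$ and $\satisfy{\signal}{m}{\stlFml}$, i.e.\ satisfaction of the until. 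For completeness I would reverse the flow: from a satisfying witness $l$ the hypothesis gives $\robust{\stlFml'}{\signal}{l} \geq 0$ and $\robust{\stlFml}{\signal}{m} \geq 0$ for all $m \le l$, so that particular inner $\min$ is $\geq 0$, and the supremum dominates it, hence is $\geq 0$.

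I expect the main obstacle to be concentrated entirely in this until clause, and specifically in the sup-versus-max subtlety: when $j = +\infty$ the index window $\interval[k+i,k+j]$ is infinite and the bound is a true supremum rather than a maximum, so the witness-extraction step for the soundness direction must be phrased in terms of a term \emph{strictly exceeding} $0$ rather than one \emph{attaining} the supremum. Everywhere else the proof is a routine propagation of the induction hypothesis through the defining clauses of the satisfaction relation and the robust semantics.
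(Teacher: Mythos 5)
Your proposal is correct and follows essentially the same route as the paper: a structural induction discharging both implications together, with the negation case resolved by taking contrapositives and the until case resolved by extracting a witness index from a strictly positive supremum. Your explicit treatment of the sup-versus-max subtlety and of the strictness needed in the negation case ($\robust{\neg\stlFml'}{\signal}{k}>0$ gives $\robust{\stlFml'}{\signal}{k}<0$, which is what the contrapositive of completeness actually requires) is, if anything, slightly more careful than the paper's own write-up.
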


If we have $\satisfy{\signal}{0}{\stlFml}$, we denote $\signal \models \stlFml$.
The \emph{safety} fragment of STL is defined similarly to that of LTL.
For an STL formula $\stlFml$, 
we define two finite semantics: 
the \emph{supremum} finite semantics $\supSem{\stlFml} \subseteq \FinOSignal$ and
the \emph{infimum} finite semantics $\infSem{\stlFml} \subseteq \FinOSignal$.
The supremum finite semantics $\supSem{\stlFml}$ is the set of prefixes \emph{potentially} satisfying the property $\stlFml$, 
and corresponding to the finite semantics of LTL in~\cite{DBLP:conf/cav/dAmorimR05}.
The infimum finite semantics $\infSem{\stlFml}$ is the set of prefixes \emph{surely} satisfying the property $\stlFml$.

\begin{mydefinition}
 [$\supSem{\stlFml}, \infSem{\stlFml}$]
 For an STL formula $\stlFml$, 
 the \emph{supremum finite semantics} $\supSem{\stlFml} \subseteq \FinOSignal$ and
 the \emph{infimum finite semantics} $\infSem{\stlFml} \subseteq \FinOSignal$ are 
\ShortVersion{ $\supSem{\stlFml} = \{\signal \in \FinOSignal \mid \exists \signal' \in \InfOSignal.\, \signal \cdot\signal' \models \stlFml \}$ and
 $\infSem{\stlFml} = \{\signal \in \FinOSignal \mid \forall \signal' \in \InfOSignal.\, \signal \cdot\signal' \models \stlFml \}$.}\LongVersion{ defined as follows.
 \begin{align*}
  \supSem{\stlFml} &= \{\signal \in \FinOSignal \mid \exists \signal' \in \InfOSignal.\, \signal \cdot\signal' \models \stlFml \}\\
  \infSem{\stlFml} &= \{\signal \in \FinOSignal \mid \forall \signal' \in \InfOSignal.\, \signal \cdot\signal' \models \stlFml \}
 \end{align*}}
\end{mydefinition}

As the robust semantics for the finite signals,
we employ \emph{robust satisfaction interval (RoSI)}~\cite{DBLP:journals/fmsd/DeshmukhDGJJS17}.

\begin{mydefinition}
 [robust satisfaction interval]
 For an STL formula $\stlFml$ over $\OVar$, a \emph{finite} signal $\signal \in \FinOSignal$\LongVersion{ over $\OVar$}, and $k \in \N$,
 the \emph{robust satisfaction interval} $\RoSI{\stlFml}{\signal}{k}$ 
 is the following closed interval over $\R \cup \{\pm\infty\}$.
 \begin{displaymath}
  \RoSI{\stlFml}{\signal}{k} = \left[
  \inf_{\signal' \in \InfOSignal} \robust{\stlFml}{\signal \cdot \signal'}{k},
  \sup_{\signal' \in \InfOSignal} \robust{\stlFml}{\signal \cdot \signal'}{k}
  \right]
 \end{displaymath}
\end{mydefinition}

\begin{mytheorem}
 [soundness and completeness]
 \label{theorem:sound_correct_robust_finite}
 For
 an STL formula $\stlFml$ over $\OVar$, 
 a \emph{finite} signal $\signalWithInside \in \FinOSignal$\LongVersion{ over $\OVar$}, and
 $k \in \N$,
 we have the following. 
 \begin{align*}
  \sup(\RoSI{\stlFml}{\signal}{k}) > 0 &\Rightarrow \subseq{\signal}{k}{\abs{\signal}-1} \in \supSem{\stlFml}\\
  \subseq{\signal}{k}{\abs{\signal}-1} \in \supSem{\stlFml} &\Rightarrow \sup(\RoSI{\stlFml}{\signal}{k}) \geq 0 \\
  \inf(\RoSI{\stlFml}{\signal}{k}) > 0 &\Rightarrow \subseq{\signal}{k}{\abs{\signal}-1} \in \infSem{\stlFml}\\
  \subseq{\signal}{k}{\abs{\signal}-1} \in \infSem{\stlFml} &\Rightarrow \inf(\RoSI{\stlFml}{\signal}{k}) \geq 0
 \end{align*}
 \qed
\end{mytheorem}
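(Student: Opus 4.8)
The plan is to reduce the finite-signal statement to the infinite-signal \cref{theorem:sound_correct_robust_infinite}, exploiting the fact that $\RoSI{\stlFml}{\signal}{k}$ is by definition the pair consisting of the infimum and the supremum of $\robust{\stlFml}{\signal\cdot\signal'}{k}$ over all infinite extensions $\signal' \in \InfOSignal$, while $\supSem{\stlFml}$ and $\infSem{\stlFml}$ are defined by existential resp.\ universal quantification over those same extensions. Each of the four implications should then fall out by pairing the appropriate direction of the infinite theorem with an elementary fact about suprema and infima of a set of extended reals.

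First I would record a shift-invariance observation, proved by routine structural induction on $\stlFml$: for every infinite signal $\pi \in \InfOSignal$ the value $\robust{\stlFml}{\pi}{k}$ depends only on the suffix of $\pi$ starting at position $k$, and likewise $\satisfy{\pi}{k}{\stlFml}$ holds iff the $k$-shifted signal satisfies $\stlFml$ at position $0$. Applied to $\pi=\signal\cdot\signal'$ this yields $\robust{\stlFml}{\signal\cdot\signal'}{k}=\robust{\stlFml}{\subseq{\signal}{k}{\abs{\signal}-1}\cdot\signal'}{0}$ and $\satisfy{\signal\cdot\signal'}{k}{\stlFml}$ iff $\subseq{\signal}{k}{\abs{\signal}-1}\cdot\signal' \models \stlFml$. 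This is exactly the bridge needed, since the membership conditions in the statement are phrased through $\subseq{\signal}{k}{\abs{\signal}-1}\in\supSem{\stlFml}$ resp.\ $\infSem{\stlFml}$, i.e.\ via satisfaction at position $0$, whereas $\RoSI$ is evaluated at position $k$.

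With this in hand the four lines are direct. For the two supremum lines I use that, for a nonempty set $S$ of extended reals, $\sup S>0$ forces some element of $S$ to be strictly positive, while the existence of one nonnegative element forces $\sup S\geq 0$. Hence $\sup(\RoSI{\stlFml}{\signal}{k})>0$ supplies an extension $\signal'$ with $\robust{\stlFml}{\signal\cdot\signal'}{k}>0$; the first implication of \cref{theorem:sound_correct_robust_infinite} gives $\satisfy{\signal\cdot\signal'}{k}{\stlFml}$, and the shift observation rewrites this as $\subseq{\signal}{k}{\abs{\signal}-1}\cdot\signal'\models\stlFml$, i.e.\ $\subseq{\signal}{k}{\abs{\signal}-1}\in\supSem{\stlFml}$. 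Conversely, membership in $\supSem{\stlFml}$ provides an extension whose position-$k$ robustness is $\geq 0$ by the second implication, so the supremum is $\geq 0$. The two infimum lines are the same argument with the quantifiers dualized: $\inf(\RoSI{\stlFml}{\signal}{k})>0$ makes every extension have strictly positive robustness, hence every extension satisfies $\stlFml$ at $k$, giving $\subseq{\signal}{k}{\abs{\signal}-1}\in\infSem{\stlFml}$; and membership in $\infSem{\stlFml}$ makes every extension nonnegative, whence the infimum is $\geq 0$.

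The only genuinely load-bearing step is the shift-invariance lemma, and even it is routine: the base cases $\top$ and $\ovar \bowtie \constant$ are immediate, the Boolean and $\NextOp$ cases follow by unfolding the definitions, and the $\UntilOp{\interval}$ case follows because its defining $\sup$/$\min$ range over the window $[k+i,k+j)$ and $\{k,\dots,l\}$, which shift uniformly with $k$. I expect the main obstacle to be purely bookkeeping: keeping the index $k$ and the suffix $\subseq{\signal}{k}{\abs{\signal}-1}$ consistent (in particular the boundary $k=\abs{\signal}$ where the suffix is empty), and carefully matching the strict/non-strict thresholds ($>0$ versus $\geq 0$) to the soundness versus completeness directions of \cref{theorem:sound_correct_robust_infinite}, so that no inequality is inadvertently strengthened.
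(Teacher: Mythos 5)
Your proposal is correct, but it takes a genuinely different and more economical route than the paper. The paper proves \cref{theorem:sound_correct_robust_finite} by structural induction on $\stlFml$, with a separate duality lemma ($\signal \not\in \infSem{\stlFml} \iff \signal \in \supSem{\neg\stlFml}$ and its dual) to handle the negation case; notably, its disjunction and until cases do not really use the induction hypothesis at all but instead unfold $\RoSI{\stlFml}{\signal}{k}$ as an $\inf$/$\sup$ over extensions and invoke \cref{theorem:sound_correct_robust_infinite} directly. You observe that this last move works \emph{uniformly} for every $\stlFml$, since $\RoSI{\stlFml}{\signal}{k}$ is defined non-inductively as $\inf$/$\sup$ of $\robust{\stlFml}{\signal\cdot\signal'}{k}$ over all $\signal' \in \InfOSignal$ while $\supSem{\stlFml}$ and $\infSem{\stlFml}$ are the corresponding existential and universal quantifications over the same extensions; the four implications then follow from the two directions of \cref{theorem:sound_correct_robust_infinite} paired with the elementary facts that $\sup S>0$ forces a positive element and a nonnegative element forces $\sup S\geq 0$ (dually for $\inf$). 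This collapses the induction and dispenses with the duality lemma entirely. The price is that you must state the shift-invariance bridge ($\satisfy{\signal\cdot\signal'}{k}{\stlFml}$ iff $\subseq{\signal}{k}{\abs{\signal}-1}\cdot\signal'\models\stlFml$) as an explicit lemma, because $\supSem{\stlFml}$ and $\infSem{\stlFml}$ are phrased via satisfaction at position $0$ while $\RoSI$ is evaluated at position $k$; the paper uses exactly this identification silently in every case of its induction, so making it explicit is arguably an improvement rather than an overhead. Your flagged boundary case $k\geq\abs{\signal}$ (where the suffix is empty and the $k$-shift of $\signal\cdot\signal'$ is a proper suffix of $\signal'$ rather than $\signal'$ itself) is the one place needing genuine care, and it is handled correctly by re-choosing the witnessing extension; the paper's own proof has the same implicit boundary issue (and even an off-by-one between $\subseq{\signal}{k}{\abs{\signal}}$ in the proof and $\subseq{\signal}{k}{\abs{\signal}-1}$ in the statement), so you are not losing anything relative to it.
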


One computational issue on\LongVersion{ the robust satisfaction interval} $\RoSI{\stlFml}{\signal}{k}$ is that its definition is not inductive and
it is unclear if it is effectively computable.
Instead, we use the following inductive overapproximation $\finRobust{\stlFml}{\signal}{k}$ of $\RoSI{\stlFml}{\signal}{k}$ as a quantitative satisfaction degree in our method.

\begin{mydefinition}
 [$\finRobust{\signal}{\stlFml}{k}$]
 For an STL formula $\stlFml$ over $\OVar$, 
 a \emph{finite} signal $\signalWithInside \in \FinOSignal$\LongVersion{ over $\OVar$}, and
 $k \in \N$, 
 $\finRobust{\stlFml}{\signal}{k}$ is 
 the closed interval over $\R \cup \{\pm\infty\}$ inductively defined as follows.
 \begin{align*}
 \finRobust{\top}{\signal}{k} &= [+\infty, +\infty]\\
 \finRobust{\ovar > \constant}{\signal}{k} &= 
 \begin{cases}
 [\dval_k(\ovar) - \constant, \dval_k(\ovar) - \constant] & \text{if $k < \abs{\signal}$}\\
 [-\infty, +\infty] & \text{if $k \geq \abs{\signal}$}
 \end{cases}\\
 \finRobust{\ovar < \constant}{\signal}{k} &= 
 \begin{cases}
 [- \dval_k(\ovar) + \constant, - \dval_k(\ovar) + \constant] & \text{if $k < \abs{\signal}$}\\
 [-\infty,+\infty] & \text{if $k \geq \abs{\signal}$}
 \end{cases}\\
 \finRobust{\neg \stlFml}{\signal}{k} &= -\finRobust{\stlFml}{\signal}{k}\\
 \finRobust{\stlFml \lor \stlFml'}{\signal}{k} &= \max(\finRobust{\stlFml}{\signal}{k}, \finRobust{\stlFml'}{\signal}{k})\\
 \finRobust{\NextOp \stlFml}{\signal}{k} &=  \finRobust{\stlFml}{\signal}{k+1}\\
 \finRobust{\stlFml \UntilOp{\interval} \stlFml'}{\signal}{k} &=\\ \max_{l \in \{k + i,k + i+1,\dots,k + j\}} &\min\bigl(\finRobust{\stlFml'}{\signal}{l}, \min_{m \in \{k,k+1,\dots,l\}}\finRobust{\stlFml}{\signal}{m} \bigr)
\end{align*}
\end{mydefinition}

\begin{mytheorem}
 \label{theorem:robust_finite_computation} 
 For any STL formula $\stlFml$ over $\OVar$, 
 a \emph{finite} signal $\signalWithInside \in \FinOSignal$\LongVersion{ over $\OVar$}, and
 $k \in \N$, 
 we have
 $\RoSI{\stlFml}{\signal}{k} \subseteq \finRobust{\stlFml}{\signal}{k}$.
 \qed
\end{mytheorem}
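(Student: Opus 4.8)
The plan is to prove the following pointwise strengthening by structural induction on $\stlFml$: for every infinite continuation $\signal' \in \InfOSignal$ and every $k \in \N$ we have $\robust{\stlFml}{\signal \cdot \signal'}{k} \in \finRobust{\stlFml}{\signal}{k}$. Once this is in hand the theorem is immediate. By definition $\RoSI{\stlFml}{\signal}{k}$ is the closed interval whose endpoints are $\inf_{\signal'}\robust{\stlFml}{\signal\cdot\signal'}{k}$ and $\sup_{\signal'}\robust{\stlFml}{\signal\cdot\signal'}{k}$; since the whole set of values $\{\robust{\stlFml}{\signal\cdot\signal'}{k} \mid \signal' \in \InfOSignal\}$ lies inside the closed interval $\finRobust{\stlFml}{\signal}{k}$, its infimum and supremum lie there too, so $\RoSI{\stlFml}{\signal}{k} \subseteq \finRobust{\stlFml}{\signal}{k}$.

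The induction rests on three elementary ``containment is preserved'' facts about the interval operations defined in the Notations paragraph: if $x \in I$ then $-x \in -I$; and if $x \in I_1$ and $y \in I_2$ then $\max(x,y) \in \max(I_1,I_2)$ and $\min(x,y) \in \min(I_1,I_2)$, the last two extending to a family indexed by an arbitrary set (reading the interval $\max$ as the supremum of the endpoints when the set is infinite). The base cases are direct. For $\top$ both sides are $[+\infty,+\infty]$. For an atomic $\ovar \bowtie \constant$ with $k < \abs{\signal}$ the continuation $\signal'$ is irrelevant, so $\robust{\ovar \bowtie \constant}{\signal\cdot\signal'}{k}$ equals the single value that is simultaneously the lower and upper endpoint of $\finRobust{\ovar \bowtie \constant}{\signal}{k}$; and for $k \geq \abs{\signal}$ we have $\finRobust{\ovar \bowtie \constant}{\signal}{k} = [-\infty,+\infty]$, which contains whatever value the suffix produces.

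For the inductive step, the cases $\neg\stlFml$, $\stlFml \lor \stlFml'$, and $\NextOp\stlFml$ follow by applying the matching containment fact to the inductive hypothesis (for $\NextOp$ the hypothesis is invoked at index $k+1$, which is legitimate because the induction is on formula structure with $k$ universally quantified and the finite signal $\signal$ held fixed). The crux is $\stlFml \UntilOp{\interval} \stlFml'$. Writing $a_l = \robust{\stlFml'}{\signal\cdot\signal'}{l}$ and $b_m = \robust{\stlFml}{\signal\cdot\signal'}{m}$, the real robustness is $\sup_{l}\min\bigl(a_l,\min_{m} b_m\bigr)$ over the modality's index range, while $\finRobust{\stlFml \UntilOp{\interval} \stlFml'}{\signal}{k}$ is literally the same expression with $a_l,b_m$ replaced by the intervals $\finRobust{\stlFml'}{\signal}{l},\finRobust{\stlFml}{\signal}{m}$ and with $\sup/\min$ read as interval operations. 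By the inductive hypothesis $a_l \in \finRobust{\stlFml'}{\signal}{l}$ and $b_m \in \finRobust{\stlFml}{\signal}{m}$, so repeated application of the $\min$- and $\max$-containment facts propagates containment outward, sandwiching the real value between the lower-endpoint expression and the upper-endpoint expression, i.e. placing it inside $\finRobust{\stlFml \UntilOp{\interval} \stlFml'}{\signal}{k}$.

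I expect this until case to be the only real obstacle, and within it two bookkeeping points demand care. First, one must reconcile the index range of the temporal modality in $\robust{\cdot}{\cdot}{\cdot}$ with the finite range used in $\finRobust{\cdot}{\cdot}{\cdot}$, verifying that any index exceeding $\abs{\signal}-1$ contributes a degenerate interval whose relevant endpoint is $\pm\infty$ (the atomic base case, propagated through the induction), so that the unknown suffix $\signal'$ cannot drag the real value outside the computed interval. Second, when the modality is unbounded ($j=+\infty$) the outer $\sup$ ranges over an infinite index set, so the interval ``$\max$'' must be read as a supremum of endpoints; the containment facts still apply, since they only assert inequalities between a value and interval endpoints, and such inequalities survive passage to $\sup$ and $\inf$ irrespective of attainment. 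Everything else reduces to routine interval arithmetic.
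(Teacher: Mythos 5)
Your proof is correct, but it carries a different induction invariant than the paper's. The paper reduces the containment to four endpoint inequalities, $\inf(\finRobust{\stlFml}{\signal}{k}) \leq \inf(\RoSI{\stlFml}{\signal}{k})$ and $\sup(\RoSI{\stlFml}{\signal}{k}) \leq \sup(\finRobust{\stlFml}{\signal}{k})$, and establishes them case by case by exchanging $\sup_{\signal'}$ and $\inf_{\signal'}$ with the $\max$/$\min$/$\sup$ operators of the semantics --- so in the until case it must separately verify, in the correct directions, identities and inequalities such as $\inf_{\signal'}\max = \max\inf_{\signal'}$ applied to disjunction, $\sup_{\signal'}\min \leq \min\sup_{\signal'}$, and $\inf_{\signal'}\sup_l \geq \sup_l \inf_{\signal'}$. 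You instead prove the pointwise statement that $\robust{\stlFml}{\signal\cdot\signal'}{k} \in \finRobust{\stlFml}{\signal}{k}$ for every continuation $\signal'$, and recover the theorem by taking $\inf$ and $\sup$ over $\signal'$ at the very end; the inductive step then rests only on the three uniform facts that negation, $\max$, and $\min$ of points respect the corresponding interval operations. The two invariants are logically equivalent here (since $\RoSI{\stlFml}{\signal}{k}$ is by definition the closed interval hull of the achievable values), but your version buys a cleaner until case: one containment-preservation argument replaces the paper's direction-sensitive bookkeeping of which quantifier exchanges are equalities and which are one-sided. What it costs is that the quantifier-exchange inequalities, which the paper makes explicit, are absorbed into the final ``pass to $\inf$/$\sup$'' step. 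Your two flagged bookkeeping points are the right ones to flag; note that the mismatch between the index range $\interval[k+i,k+j]$ in the robust semantics and $\{k+i,\dots,k+j\}$ in $\finRobust{\stlFml \UntilOp{\interval} \stlFml'}{\signal}{k}$ is a wrinkle in the paper's own definitions that its proof also silently identifies, so you are not on the hook for resolving it.
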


\mw{In a journal version, we can show an example of  $\RoSI{\stlFml}{\signal}{k} \neq \finRobust{\stlFml}{\signal}{k}$.}

\cref{corollary:robust_correctness} justifies the use of $\finRobust{\stlFml}{\signal}{0}$ as a quantitative satisfaction degree of $\signal \models \stlFml$.

\begin{mycorollary}
 \label{corollary:robust_correctness}
 Let $\stlFml$ be an STL formula over $\OVar$,
 let $\signalWithInside \in \FinOSignal$ be a finite signal\LongVersion{ over $\OVar$}, and
 $k \in \N$.
 If we have $\sup(\finRobust{\stlFml}{\signal}{0}) < 0$, 
 for any $\signal' \in \InfOSignal$, we have
 $\signal \cdot \signal' \not\models \stlFml$.
 If we have $\inf(\finRobust{\stlFml}{\signal}{0}) < 0$, 
 there exists $\signal' \in \InfOSignal$ satisfying
 $\signal \cdot \signal' \not\models \stlFml$.
 \qed
\end{mycorollary}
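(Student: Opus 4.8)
The plan is to unwind both implications into (non-)membership statements about the finite semantics $\supSem{\stlFml}$ and $\infSem{\stlFml}$ and then discharge them through the three preceding theorems. Since the hypotheses are stated at index $0$ and $\subseq{\signal}{0}{\abs{\signal}-1} = \signal$, the first conclusion ``for every $\signal' \in \InfOSignal$ we have $\signal \cdot \signal' \not\models \stlFml$'' is exactly $\signal \notin \supSem{\stlFml}$, and the second conclusion ``there is $\signal' \in \InfOSignal$ with $\signal \cdot \signal' \not\models \stlFml$'' is exactly $\signal \notin \infSem{\stlFml}$. So it suffices to prove $\sup(\finRobust{\stlFml}{\signal}{0}) < 0 \Rightarrow \signal \notin \supSem{\stlFml}$ and $\inf(\finRobust{\stlFml}{\signal}{0}) < 0 \Rightarrow \signal \notin \infSem{\stlFml}$.

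For the first (supremum) part I would chain the inclusion of \cref{theorem:robust_finite_computation} with the completeness half of \cref{theorem:sound_correct_robust_finite}. Concretely, $\RoSI{\stlFml}{\signal}{0} \subseteq \finRobust{\stlFml}{\signal}{0}$ forces $\sup(\RoSI{\stlFml}{\signal}{0}) \le \sup(\finRobust{\stlFml}{\signal}{0}) < 0$; then the contrapositive of ``$\signal \in \supSem{\stlFml} \Rightarrow \sup(\RoSI{\stlFml}{\signal}{0}) \ge 0$'' yields $\signal \notin \supSem{\stlFml}$, which is the claim. Equivalently, one argues directly: if some extension satisfied $\stlFml$, infinite-signal soundness (\cref{theorem:sound_correct_robust_infinite}) would make its robustness nonnegative, hence $\sup(\RoSI{\stlFml}{\signal}{0}) \ge 0$, contradicting $\sup(\finRobust{\stlFml}{\signal}{0}) < 0$.

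The infimum part is the delicate step, and I expect it to be the main obstacle. The symmetric chain breaks: from $\RoSI{\stlFml}{\signal}{0} \subseteq \finRobust{\stlFml}{\signal}{0}$ we only get $\inf(\RoSI{\stlFml}{\signal}{0}) \ge \inf(\finRobust{\stlFml}{\signal}{0})$, i.e. the inequality points the wrong way, so $\inf(\finRobust{\stlFml}{\signal}{0}) < 0$ does not propagate to $\inf(\RoSI{\stlFml}{\signal}{0}) < 0$, and I cannot route through \cref{theorem:sound_correct_robust_finite} as before. My fallback plan is a direct structural induction on $\stlFml$, proving simultaneously the ``negative'' statement $\inf(\finRobust{\stlFml}{\signal}{k}) < 0 \Rightarrow \exists \signal'.\,(\signal \cdot \signal', k) \not\models \stlFml$ and its dual $\sup(\finRobust{\stlFml}{\signal}{k}) > 0 \Rightarrow \exists \signal'.\,(\signal \cdot \signal', k) \models \stlFml$; the dual is needed for the $\neg$ case, which swaps $\inf$ and $\sup$ via $\finRobust{\neg \stlFml}{\signal}{k} = -\finRobust{\stlFml}{\signal}{k}$. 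The atomic and $\top$ cases are immediate: when $k < \abs{\signal}$ the value at $k$ is already fixed, and when $k \ge \abs{\signal}$ I can choose the offending valuation freely inside $\signal'$.

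The real difficulty lies in the $\lor$ and $\UntilOp{\interval}$ cases of the negative statement. There, $\inf$ of the combined interval being negative only gives, through the induction hypothesis, a separate falsifying extension for each operand, whereas the conclusion demands a \emph{single} $\signal'$ violating the whole formula at once; reconciling these witnesses is exactly where the overapproximation in $\finRobust{\stlFml}{\signal}{k}$ is felt. I would flag this as the point that must be handled with care, e.g. by isolating the fragment on which $\finRobust{\stlFml}{\signal}{k}$ stays tight, or, if the intended quantity is really $\RoSI{\stlFml}{\signal}{0}$, by working from the definition $\inf(\RoSI{\stlFml}{\signal}{0}) = \inf_{\signal'} \robust{\stlFml}{\signal \cdot \signal'}{0}$ and applying the contrapositive of \cref{theorem:sound_correct_robust_infinite} to an extension achieving (or approaching) a negative robustness.
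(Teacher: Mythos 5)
Your treatment of the first implication is exactly the intended one: by \cref{theorem:robust_finite_computation} we get $\sup(\RoSI{\stlFml}{\signal}{0}) \le \sup(\finRobust{\stlFml}{\signal}{0}) < 0$, and the contrapositive of the second implication of \cref{theorem:sound_correct_robust_finite} then gives $\signal \notin \supSem{\stlFml}$, which is precisely the first conclusion. The paper offers no further argument for the corollary (it is stated with an immediate end-of-proof mark), so there is nothing more to compare against on that side.

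Your suspicion about the infimum part is not merely a difficulty to be worked around: the second claim is false as stated for $\finRobust{\cdot}{\cdot}{\cdot}$, so neither the chain through \cref{theorem:sound_correct_robust_finite} nor your fallback induction can be completed. Take $\stlFml = \NextOp(\ovar > 0) \lor \NextOp(\ovar < 1)$ and any $\signal$ with $\abs{\signal} = 1$. Position $1$ lies beyond $\signal$, so $\finRobust{\NextOp(\ovar>0)}{\signal}{0} = \finRobust{\ovar>0}{\signal}{1} = [-\infty,+\infty]$ and likewise for the other disjunct; hence $\finRobust{\stlFml}{\signal}{0} = [-\infty,+\infty]$ and its infimum is negative. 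Yet for every $\signal' \in \InfOSignal$ the valuation at position $1$ of $\signal\cdot\signal'$ satisfies $\ovar > 0$ or $\ovar < 1$, so $\signal \cdot \signal' \models \stlFml$ and no violating extension exists. This is exactly the $\lor$ case you flagged: the induction hypothesis hands you one falsifying extension per disjunct, and they cannot be merged into a single witness. The implication does hold with $\RoSI{\stlFml}{\signal}{0}$ in place of $\finRobust{\stlFml}{\signal}{0}$ --- if $\inf(\RoSI{\stlFml}{\signal}{0}) < 0$ then some $\signal'$ attains $\robust{\stlFml}{\signal\cdot\signal'}{0} < 0$ and \cref{theorem:sound_correct_robust_infinite} yields $\signal\cdot\signal' \not\models \stlFml$ --- which is the repair you already sketched in your final sentence. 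For the overapproximation only the weaker, one-sided reading survives: a negative infimum of the computed interval signals that violation is \emph{not excluded}, not that a violating extension exists.
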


\section{Black-box checking of cyber-physical systems} \label{section:bbc-cps}

In this section, we show how to solve the falsification problem by BBC. 
Moreover, we enhance the membership testing by the robustness in STL.
This is our main contribution.
Let $\IVar$ and $\OVar$ be the finite sets of the input and output variables, respectively.
We define \emph{CPS model} $\actualModel$ over $(\IVar,\OVar)$ as a function $\actualModel\colon\FinISignal\to\FinOSignal$ satisfying $\abs{\signal} = \abs{\actualModel(\signal)}$.
The input signal $\signal \in \FinISignal$ shows the inputs (\eg{} the angle of the brake pedal) at each time step, and
 the output signal  $\actualModel(\signal) \in \FinOSignal$ shows the states (\eg{} the speed of the car) at each time step.

We solve the falsification problem using BBC, where the given black-box system is a CPS model $\actualModel\colon \FinISignal \to \FinOSignal$.
Although the input and the output domain of the CPS model $\actualModel$ is \emph{continuous}, 
we construct a Mealy machine $\learnedModel$ with \emph{finite} input and output.
In what follows, we present how to implement the membership and equivalence oracles for CPSs, and how we employ BBC to falsify multiple STL formulas.
We note that we can still use LTL model checking because discrete-time STL can be interpreted as LTL and 
we assume that there is a constant sampling rate for the CPS trajectories.

\subsection{Membership oracle with alphabet abstraction}
\begin{figure}[tbp]
\scalebox{0.85}{ 
\begin{tikzpicture}[shorten >=1pt,node distance=4cm,on grid,auto] 
 \small
  \node(outside){};
  \node[rectangle,draw,node distance=3.7cm](mapper)[right=of outside, align=center] {I/O Mapper\\ $\IMapper\colon \Alphabet \to \R^{\IVar}$ \\ $\OMapper\colon \R^{\OVar} \to \powerset{\AP}$};
  \node[rectangle,draw,node distance=4.0cm](model)[right=of mapper, align=center] {CPS Model $\actualModel$\\ \eg{} Simulink model};

  \node(title) at (2.0, 1.2) {\textbf{Membership Oracle}};
  \draw [rectangle] (0.3,1.4) rectangle (9.5,-1.2);

\path[->] 
  (outside) edge[transform canvas={yshift=3mm}] node[align=left] {$\icommand \in \ICommand$} (mapper)
  (mapper) edge[transform canvas={yshift=3mm}] node[above=0.3,align=left] {$\signal = \overline{\IMapper}(\icommand) \in \FinISignal$} (model)
  (model) edge[transform canvas={yshift=-3mm}] node[below=0.3,align=left] {$\actualModel(\signal) \in \FinOSignal$} (mapper)
  (mapper) edge[transform canvas={yshift=-3mm}] node[below=0.3,align=left,pos=0.2] {$\overline{\OMapper}(\actualModel(\signal)) \in \FinPropTrace$} (outside);
\end{tikzpicture}}
\caption{Membership oracle for a CPS model using alphabet abstraction: $\IMapper$ and $\OMapper$ are applied to each element. See \cref{section:preliminaries} for the notation $\overline{\IMapper}$ and $\overline{\OMapper}$.}
\label{fig:membership_mapper}
\end{figure}
Same as the usual BBC and automata learning for software testing,  we use the CPS model $\actualModel$ as the membership oracle.
As we discussed in \cref{subsec:automata_learning}, 
we have to abstract the alphabet due to the real-valued input and output of $\actualModel$.
As the abstract input and output alphabets, we use a finite set $\Alphabet$ and the power set $\powerset{\AP}$ of atomic propositions, respectively.
For simplicity, we employ a stateless mapper.
Namely, for the input alphabet $\Alphabet$, we define the \emph{input mapper} $\IMapper\colon \Alphabet \to \R^{\IVar}$, which assigns one input signal valuation to each $\inChar \in \Alphabet$, and
for the output alphabet $\powerset{\AP}$, we define the \emph{output mapper} $\OMapper\colon \R^{\OVar} \to \powerset{\AP}$, which returns the set of the atomic propositions satisfied for the given output signal valuation. 
We apply $\IMapper$ and $\OMapper$ to each element of the sequences.
See \cref{fig:membership_mapper} for an illustration.
We note that the construction of $\IMapper$ and $\OMapper$ as well as the choice of\LongVersion{ the input alphabet} $\Alphabet$ are done by a user.

\subsection{Robustness-guided equivalence testing} \label{section:rob-bbc}
\begin{algorithm}[tbp]
 \caption{Search-based equivalence testing}
 \label{algorithm:membership_testing}
\small
 \DontPrintSemicolon
 \newcommand{\myCommentFont}[1]{\texttt{\footnotesize{#1}}}
 \SetCommentSty{myCommentFont}
 \KwIn{CPS model $\actualModel\colon\R^{\IVar}\to\R^{\OVar}$, input mapper $\IMapper\colon \Alphabet \to \R^{\IVar}$, output mapper $\OMapper\colon \R^{\OVar} \to \powerset{\AP}$, STL formula $\stlFml$, and Mealy machine $\learnedModel\colon \ICommand\to\FinPropTrace$}
 \KwOut{Returns $\icommand \in \ICommand$ satisfying $\overline{\OMapper}(\actualModel(\overline{\IMapper}(\icommand))) \neq \learnedModel(\icommand)$, or $\bot$ when no such $\icommand$ was found}
 \tcc{sample the initial population}
 $\population \gets \initPopulation()$
 \label{alg_line:sample_initial_population}\;
 \Until{$\isTimeout()$} {
 \If{$\exists \icommand \in \population.\, \overline{\OMapper}(\actualModel(\overline{\IMapper}(\icommand))) \neq \learnedModel(\icommand)$} {
 \label{alg_line:check_equivalence}
 \Return{$\icommand$}
 }
 \tcc{Generate the next population \eg{} by random sampling or robustness-guided optimization}
 $\population \gets \generateNextPopulation(\population, \actualModel, \stlFml)$
 \label{alg_line:generate_next_population}
 }
 \Return{$\bot$}
\end{algorithm}

As we discussed in \cref{section:introduction,subsec:automata_learning}, 
we need an equivalence testing method to find a counterexample even if it is too rare for random search.
\cref{algorithm:membership_testing} shows a general outline of search-based equivalence testing (including random search) of a CPS model $\actualModel$ and a Mealy machine $\learnedModel$.

In random search, after randomly sampling the initial inputs $\population \subseteq \ICommand$ (\cref{alg_line:sample_initial_population}), 
we test the equivalence of $\actualModel$ and $\learnedModel$ for each input $\icommand \in \population$ (\cref{alg_line:check_equivalence}).
If we find no counterexample, we again randomly sample the next inputs $\population \subseteq \ICommand$ (\cref{alg_line:generate_next_population}) and 
test the equivalence again.
We repeat such a sampling (\cref{alg_line:generate_next_population}) and testing (\cref{alg_line:check_equivalence}) until we find a counterexample $\icommand$ or we reach the timeout.

The main observation in robustness-guided equivalence testing is as follows.
If we have $\overline{\OMapper} \circ \actualModel \circ \overline{\IMapper} \not\models \stlFml$ and $\learnedModel \models \stlFml$,
by a discrete input $\icommand \in \ICommand$ witnessing $\overline{\OMapper} \circ \actualModel \circ \overline{\IMapper} \not\models \stlFml$,
we can also witness $\overline{\OMapper} \circ \actualModel \circ \overline{\IMapper} \neq \learnedModel$, where $\circ$ is the function composition.
Thus, by minimizing the robustness of the CPS model $\actualModel$, we can guide the search to the inputs witnessing the difference between 
$\overline{\OMapper} \circ \actualModel \circ \overline{\IMapper}$ and $\learnedModel$.
Specifically, in \cref{alg_line:generate_next_population} of \cref{algorithm:membership_testing}, we use optimization to sample such inputs $\population$ that makes the robustness of the CPS model $\actualModel$ low.

For example, we can use \emph{local search} \eg{} \emph{hill climbing} and \emph{genetic algorithm}~\cite{DBLP:books/daglib/0070933}, where
the objective is to minimize $\sup(\finRobust{\stlFml}{\signal}{0})$.
We can continue this optimization along different equivalence testing calls by taking over the inputs in \cref{alg_line:sample_initial_population} instead of generating randomly.


\subsection{BBC for multiple specifications} \label{subsection:multiple-spec}
\LongVersion{\begin{algorithm}[tbp]
 \caption{BBC for multiple specifications}
 \label{algorithm:outline_BBC}
 \small
 \DontPrintSemicolon
 \newcommand{\myCommentFont}[1]{\texttt{\footnotesize{#1}}}
 \SetCommentSty{myCommentFont}
 \KwIn{
 CPS model $\actualModel\colon\R^{\IVar}\to\R^{\OVar}$, 
 input mapper $\IMapper\colon \Alphabet \to \R^{\IVar}$,
 output mapper $\OMapper\colon \R^{\OVar} \to \powerset{\AP}$, and
 STL formulas $\stlFml_1, \stlFml_2, \dots, \stlFml_n$.}
 \KwOut{A set $\mathit{result}$ of pairs $(\icommand_i,\stlFml_i)$, where $i \in \{1,2,\dots,n\}$ and
 $\icommand_i \in \ICommand$ is a witness of $\actualModel \not\models \stlFml_i$.}
 $\mathit{result} \gets \emptyset$;\,
 $\unfalsified \gets \{1,2,\dots,n\}$\;
 \tcc{Extract a Mealy machine from the actual CPS model (above of \cref{fig:bbc})}
 $\learnedModel \gets \mathrm{learnMealy}(\actualModel)$\;
 \label{alg_line:outline_BBC:init}
 \Repeat {$\mathit{cex} \neq \bot$} {
 $\mathit{cex} \gets \bot$\;
 \For{$i \in \mathrm{unfalsified}$}{
 \tcc{Model checking (center of \cref{fig:bbc})}
 \If{$\learnedModel \not\models \stlFml_i$}{
 \label{alg_line:outline_BBC:model_check}
 $\icommand_i \gets$ the witness of $\learnedModel \not\models \stlFml_i$\;
 \tcc{Feed $\icommand_i$ to $\actualModel$ (right of \cref{fig:bbc})}
 \If{$\actualModel \not\models \stlFml_i$ is witnessed by $\icommand_i$}{
 \label{alg_line:outline_BBC:falsified}
 \KwPush $(\icommand_i,\stlFml_i)$ \textbf{to} $\mathit{result}$\;
 \textbf{remove} $i$ \KwFrom $\unfalsified$
 } \lElse {
 $\mathit{cex} \gets \icommand_i$;\,
 \KwBreak
 }
 }
 }
 \If{$\mathit{cex} = \bot$} {
 \For{$i \in \mathrm{unfalsified}$} {
 \tcc{Search-based equivalence testing in \cref{section:rob-bbc} (left of \cref{fig:bbc})}
 $\mathit{cex} \gets \mathrm{searchEquivTest}(\actualModel, \IMapper, \OMapper, \stlFml_i, \learnedModel)$ \;
 \label{alg_line:outline_BBC:equiv}
 \If{$\mathit{cex} \neq \bot$}
 \KwBreak
 }
 }
 \If{$\mathit{cex} \neq \bot$} {
 $\learnedModel \gets \mathrm{learnMealy}(\actualModel, \learnedModel, \mathit{cex})$\;
 \label{alg_line:outline_BBC:refine}
 }
 }
 \Return{$\mathit{result}$}
\end{algorithm}}

\LongVersion{\cref{algorithm:outline_BBC} shows how we employ BBC to falsify multiple STL formulas.}
\ShortVersion{In what follows, we explain how we employ BBC to falsify multiple STL formulas.}
\ShortVersion{At first}\LongVersion{In \cref{alg_line:outline_BBC:init}}, we extract a Mealy machine $\learnedModel$ from the CPS model $\actualModel$.
Then,\LongVersion{ in \cref{alg_line:outline_BBC:model_check},} for each STL formula $\stlFml_i$ which is not falsified yet, we check if $\learnedModel \not\models\stlFml_i$ holds by LTL model checking.
When $\learnedModel \not\models\stlFml_i$ holds, we obtain a witness $\icommand_i \in \ICommand$ of $\learnedModel \not\models\stlFml_i$.
\ShortVersion{Then}\LongVersion{In \cref{alg_line:outline_BBC:falsified}}, we check if $\icommand_i$ also witnesses $\actualModel \not\models\stlFml_i$ by checking if we have
$\overline{\OMapper} (\actualModel (\overline{\IMapper} (\icommand_i))) \not\models\stlFml_i$.
When $\icommand_i$ also witnesses $\actualModel \not\models\stlFml_i$, we \ShortVersion{report $\icommand_i$}\LongVersion{store $\icommand_i$ in $\mathit{result}$} as a witness of $\actualModel \not\models\stlFml_i$.
Otherwise, we have $\overline{\OMapper}(\actualModel(\overline{\IMapper}(\icommand_i))) \neq \learnedModel(\icommand_i)$, and we use $\icommand_i$ to refine the learned Mealy machine $\learnedModel$\LongVersion{ (in \cref{alg_line:outline_BBC:refine})}.
When $\learnedModel \models \stlFml_i$ holds,\LongVersion{ in \cref{alg_line:outline_BBC:equiv},} we use the search-based equivalence testing (\cref{algorithm:membership_testing}) to find $\mathit{cex} \in \ICommand$ satisfying 
$\overline{\OMapper}(\actualModel(\overline{\IMapper}(\mathit{cex}))) \neq \learnedModel(\mathit{cex})$.
When we find such $\mathit{cex}$, we use it to refine the learned Mealy machine $\learnedModel$\LongVersion{ (in \cref{alg_line:outline_BBC:refine})}.
Otherwise, we deem $\overline{\OMapper} \circ \actualModel \circ \overline{\IMapper} = \learnedModel$ and \ShortVersion{finish BBC}\LongVersion{return $\mathit{result}$ as the final result: the set of the falsified specifications $\stlFml_i$ with inputs $\icommand_i$ witnessing $\actualModel \not\models \stlFml_i$}.

\section{Experimental Evaluation} \label{section:experiments}

We implemented a prototypical tool \ourtool{} for robustness-guided BBC of CPSs in Java using LearnLib~\cite{DBLP:conf/cav/IsbernerHS15}, jMetal~\cite{DBLP:journals/aes/DurilloN11}, and LTSMin~\cite{DBLP:conf/tacas/KantLMPBD15}.
As the optimization method in the robustness-guided equivalence testing\LongVersion{ (\ie{} \cref{alg_line:generate_next_population} of \cref{algorithm:membership_testing})}, we employ 
a hill climbing (\hc{}) and the genetic algorithm~\cite{DBLP:books/daglib/0070933} (\ga{}).
In \hc{}, for each discrete input sequence $\icommand \in \ICommand$ in the current population set,
we generate ``children'' input sequences by a random mutation.
Then,
we construct the next population set by taking the children with the smallest robust semantics.
\hc{} is one of the simplest algorithm to exploit the robust semantics, but we may get stuck in local optima.
In \ga{}, we avoid local optima by using larger population size and
combining mutation, crossover, and selection.
\FinalVersion{Our implementation is in \url{https://github.com/MasWag/FalCAuN}.}

We conducted experiments to answer the following research questions.
\begin{description}
 \item[RQ1] Does BBC falsify as many specifications as one of the state-of-the-art falsification tools?
 \item[RQ2] For which equivalence testing, BBC performs the best?
 \item[RQ3] Does BBC falsify multiple specifications effectively?
\end{description}

\paragraph{Benchmarks}
\begin{table*}[tbp]
 \caption{List of the STL formulas sets in our benchmarks. $\stlFml_1$, $\stlFml_2$, $\stlFml_3$, $\stlFml_4$, and $\stlFml_5$ are taken from \cite{DBLP:journals/tcad/ZhangESAH18}. The other benchmarks are original. 
 The STL formulas in $\stlFml_{6,\mathrm{tiny}}$, $\stlFml_{6,\mathrm{small}}$, $\stlFml_{6,\mathrm{medium}}$, $\stlFml_{6,\mathrm{large}}$, $\stlFml_{6,\mathrm{huge}}$, and $\stlFml_{6,\mathrm{gigantic}}$ have the same structure. 
 These benchmarks are mainly used to compare the scalability with respect to the size of the benchmark.
 }
 \label{table:benchmark_stl}
 \centering
 \small
 \begin{tabular}[t]{c|c|c|c}
  & STL template& parameter valuations & size \\\hline
  $\stlFml_1$ & $\BoxOp{} (v < p)$& $p \in \{100,102.5,105,107.5,110,112.5,115,117.5,120\}$ & $9$\\ 
  $\stlFml_2$ & $\BoxOp{} (g = 3 \Rightarrow v > p)$& $p \in \{20,22.5,25,27.5,30\}$ & $5$\\ 
  $\stlFml_3$ & $\DiaOp{[p_1,p_2]} (v < p_3 \lor v > p_4)$& $(p_1,p_2) \in \{(5,20),(5,25),(15,30),(10,30)\}, (p_3,p_4) \in \{(50,60), (53,57)\}$ & $8$\\ 
  $\stlFml_4$ & $\BoxOp{[0,26]} (v < p_1) \lor \BoxOp{[28,28]} (v > p_2)$ & $p_1 \in \{90,100,110\}, p_2 \in \{55,65,75\}$ & $9$\\ 
  $\stlFml_5$ & $\BoxOp{} (\omega < p_1 \lor \NextOp (\omega > p_2))$& $p_1 \in \{4000,4700\}, p_2 \in \{600,1000,1500\}$ & $6$\\ 
  $\stlFml_{6,\mathrm{tiny}}$ & $\BoxOp{} (v < p_1 \Rightarrow \BoxOp{[0,p_2]} (v < p_3))$ & $p_1 \in \{30,40\}, p_2 = 8,p_3 = 80$ & $2$\\ 
  $\stlFml_{6,\mathrm{small}}$ & $\BoxOp{} (v < p_1 \Rightarrow \BoxOp{[0,p_2]} (v < p_3))$ & $p_1 \in \{30,40\}, p_2 = 8, p_3 \in \{70,80\}$ & $4$\\ 
  $\stlFml_{6,\mathrm{medium}}$ & $\BoxOp{} (v < p_1 \Rightarrow \BoxOp{[0,p_2]} (v < p_3))$ & $p_1 \in \{30,40\}, p_2 \in \{8,10\},p_3 \in \{70,80\}$ & $8$\\ 
  $\stlFml_{6,\mathrm{large}}$ & $\BoxOp{} (v < p_1 \Rightarrow \BoxOp{[0,p_2]} (v < p_3))$ & $p_1 \in \{30,40,50\}, p_2 \in \{8,10\},p_3 \in \{70,80\}$ & $12$\\ 
  $\stlFml_{6,\mathrm{huge}}$ & $\BoxOp{} (v < p_1 \Rightarrow \BoxOp{[0,p_2]} (v < p_3))$ & $p_1 \in \{30,40,50\}, p_2 \in \{8,10\},p_3 \in \{60,70,80\}$ & $18$\\ 
  $\stlFml_{6,\mathrm{gigantic}}$ & $\BoxOp{} (v < p_1 \Rightarrow \BoxOp{[0,p_2]} (v < p_3))$ & $p_1 \in \{30,40,50\}, p_2 \in \{6,8,10\},p_3 \in \{60,70,80,90\}$ & $36$\\ 
  $\stlFml_7$ & $\BoxOp{} (((g \neq  p_1) \land \NextOp (g = p_1)) \Rightarrow \BoxOp{[0,p_2]} (g = p_1))$ & $p_1 \in \{1,2,3,4\}, p_2 \in \{1,2,3\}$ & $12$\\ 
 \end{tabular}
\end{table*}
As the CPS model $\actualModel$, we used the Simulink model of an automatic transmission system~\cite{DBLP:conf/cpsweek/HoxhaAF14}, which is one of the  standard models in the literature on falsification.
Given a 2-dimensional signal of \emph{throttle} and \emph{brake}, the automatic transmission model $\actualModel$ returns a 3-dimensional signal of \emph{velocity} $v$, \emph{rotation} $\omega$, and \emph{gear} $g$.
The range of throttle and brake are $[0,100]$ and $[0,325]$, respectively.
The domains of\LongVersion{ velocity} $v$ and\LongVersion{ rotation} $\omega$ are reals, and the domain of\LongVersion{ gear} $g$ is $\{1,2,3,4\}$.
As the specifications, we used the sets of the STL formulas in \cref{table:benchmark_stl}.
Each benchmark consists of multiple and similar STL formulas.
For example, $\stlFml_1$ consists of $6$ STL formulas and all of them are instances of the \emph{parametric} STL formula $\BoxOp{} (v < p)$.
This setting reflects our motivating example illustrated in \cref{section:introduction}: we do not know the exact threshold in the specification and we want to test the CPS model over various specification instances.
The benchmarks $\stlFml_1$--$\stlFml_5$ are taken from~\cite{DBLP:journals/tcad/ZhangESAH18} and the benchmarks $\stlFml_6$ and $\stlFml_7$ are our original.
\paragraph{Experiment}
We compared the robustness-guided BBC methods \hc{} and \ga{} with a baseline BBC method \random{} and one of the state-of-the-art falsification tools \breach{}.

In \hc{}, for each discrete input sequence $\icommand$ in the current population, 
we generate 60 ``children''\LongVersion{ discrete input sequences} by random swap:
given a discrete input sequence $\icommandWithInside$
random swap returns $\inChar_1,\inChar_2,\dots,\inChar_{i-1},\inChar,\inChar_{i+1},\dots,\inChar_n$, where
$\inChar \in \Alphabet$ and $i\in\{1,2,\dots,n\}$ are randomly chosen.
Among the ``children'' discrete input sequences, 5 input sequences realizing the smallest robust semantics are chosen to the next population.

In \ga{}, we used uniform mutation, uniform crossover, and tournament selection.
The population size, mutation probability, and crossover probability\LongVersion{ in \ga{}} are 150, 0.01, and 0.5\LongVersion{, respectively}.

In \random{}, we used a random equivalence testing.

We used TTT algorithm~\cite{DBLP:conf/rv/IsbernerHS14} for active automata learning in BBC.
For the experiments on BBC, the timeout is 4 hours \emph{in total}.
In BBC, the input length is fixed to 30.
The abstract alphabet $\Alphabet$ is $\abs{\Alphabet} = 4$ such that the throttle is either $0$ or $100$ and the brake is either $0$ or $325$.
The atomic propositions $\AP$ is the coarsest partitions of the output space (\ie{} the valuations of $v$,$\omega$, and $g$) compatible with the inequalities in STL formulas in each benchmark.

We used \breach~\cite{DBLP:conf/cav/Donze10} version 1.5.2 as a baseline.
\breach{} provides several optimization algorithms including
\emph{covariance matrix adaptation evolution strategy (CMA-ES)}~\cite{DBLP:conf/cec/AugerH05}, 
\emph{global Nelder-Mead (GNM)}\cite{luersen2004globalized}, and 
\emph{simulated annealing (SA)}~\cite{kirkpatrick1983optimization}.
Among them, we only used CMA-ES because it is reported to outperform the other optimization methods in~\cite{DBLP:conf/cav/ZhangHA19}.
For the experiment on \breach, the timeout is 15 minutes \emph{for each} specification.
In \breach{} we generated piecewise constant signals with 30 control points.
We note that the signals generated by \breach{} take floating-point values while the discrete input sequences generated by \ourtool{} take 4 values.
Thus, the search space of \breach{} is larger but there can be specifications falsifiable only by \breach{}.

\begin{table*}[tbp]
 \centering
\small
 \caption{Summary of the experiment result.
 The numbers $N/T$ in each cell are the number $N$ of the falsified specifications and the time $T$ [min.] to falsify all the falsifiable specification. 
 For each experiment setting, the average and the standard deviation are shown.
 For each benchmark $\stlFml_{i}$, the best cell in terms of the following order is highlighted: $N/T$ is better than $N'/T'$ if and only if we have $N>N'$ or we have both $N=N'$ and $T < T'$.
 For each benchmark, the largest average of the number of the falsified properties is shown in bold blue font.}
 \newcommand{\tbcolor}{\cellcolor{green!25}}
\newcommand{\mostFalsified}[1]{\textbf{\color{blue} #1}}
\begin{tabular}[t]{c||c||c|c|c|c|c|c|c|c}
  & {\footnotesize \textsc{PureRandom}} &\multicolumn{2}{c|}{\random} & \multicolumn{2}{c|}{\hc} & \multicolumn{2}{c|}{\ga} & \multicolumn{2}{c}{\breach}\\
  & {\footnotesize aver. \# of spec.} &average & std. dev. & average & std. dev. & average & std. dev. & average & std. dev. \\ \hline\hline
$\stlFml_1$&5.70&8.80/11.10&0.60/2.73&8.90/28.53&0.30/49.81&\mostFalsified{9.00}/68.96&0.00/64.64&\tbcolor \mostFalsified{9.00}/12.05&\tbcolor 0.00/0.19\\ \hline
$\stlFml_2$&0.00&\mostFalsified{4.90}/75.99&0.30/45.98&4.80/82.56&0.40/61.28&\tbcolor\mostFalsified{4.90}/74.12&\tbcolor0.30/77.88&2.00/0.20&0.00/0.00\\ \hline
$\stlFml_3$&0.00&\tbcolor \mostFalsified{8.00}/9.34&\tbcolor 0.00/2.88&\mostFalsified{8.00}/12.68&0.00/4.45&\mostFalsified{8.00}/12.87&0.00/4.96&\mostFalsified{8.00}/22.43&0.00/0.58\\ \hline
$\stlFml_4$&0.60&6.10/100.83&0.70/76.80&5.90/124.88&0.70/73.43&\tbcolor \mostFalsified{6.90}/163.03&\tbcolor 0.30/24.56&2.60/22.37&0.80/7.41\\ \hline
$\stlFml_5$&2.40&\mostFalsified{6.00}/139.72&0.00/132.73&3.30/72.99&2.49/124.15&\tbcolor\mostFalsified{6.00}/133.66&\tbcolor 0.00/140.54&3.00/5.78&0.00/0.45\\ \hline
$\stlFml_{6,\mathrm{tiny}}$&\mostFalsified{2.00}&\tbcolor \mostFalsified{2.00}/2.24&\tbcolor 0.00/1.14&\mostFalsified{2.00}/2.44&0.00/1.11&\mostFalsified{2.00}/3.54&0.00/1.47&\mostFalsified{2.00}/3.12&0.00/0.09\\ \hline
$\stlFml_{6,\mathrm{small}}$&\mostFalsified{4.00}&\mostFalsified{4.00}/2.98&0.00/1.38&\tbcolor \mostFalsified{4.00}/2.58&\tbcolor 0.00/1.44&\mostFalsified{4.00}/3.20&0.00/1.03&\mostFalsified{4.00}/4.41&0.00/0.18\\ \hline
$\stlFml_{6,\mathrm{medium}}$&6.10&7.20/141.83&2.40/416.15&\tbcolor \mostFalsified{8.00}/2.50&\tbcolor 0.00/1.31&\mostFalsified{8.00}/4.07&0.00/2.52&\mostFalsified{8.00}/7.74&0.00/0.04\\ \hline
$\stlFml_{6,\mathrm{large}}$&9.00&10.80/288.46&3.60/566.25&\tbcolor \mostFalsified{12.00}/3.00&\tbcolor 0.00/2.02&\mostFalsified{12.00}/3.47&0.00/1.46&\mostFalsified{12.00}/9.99&0.00/0.04\\ \hline
$\stlFml_{6,\mathrm{huge}}$&12.00&\mostFalsified{18.00}/2.36&0.00/1.21&\tbcolor \mostFalsified{18.00}/2.00&\tbcolor 0.00/0.74&\mostFalsified{18.00}/3.21&0.00/0.78&\mostFalsified{18.00}/12.45&0.00/0.06\\ \hline
$\stlFml_{6,\mathrm{gigantic}}$&30.00&\tbcolor \mostFalsified{31.00}/5.59&\tbcolor 0.00/1.93&\mostFalsified{31.00}/12.15&0.00/10.05&\mostFalsified{31.00}/7.95&0.00/3.93&\mostFalsified{31.00}/36.30&0.00/0.45\\ \hline
$\stlFml_7$&0.00&\mostFalsified{12.00}/1.35&0.00/0.76&\tbcolor \mostFalsified{12.00}/1.25&\tbcolor 0.00/0.72&\mostFalsified{12.00}/1.84&0.00/0.50&9.00/0.38&0.00/0.01
\end{tabular}
 
 \label{table:experiment_result}
\end{table*}

Since the optimization algorithm in \ga{}, \hc{}, \breach{} as well as the random sampling in \random{} are stochastic, 
we executed each benchmark and algorithm for 10 times.
For each execution, we measured the number of the falsified specifications and the time to falsify all the falsified specifications.
For \breach{}, we used the sum of the time to falsify all the falsified specifications.
\cref{table:experiment_result} shows the summary of the experiment result.
We also show the result of a pure random sampling process (\textsc{PureRandom}) to confirm the hardness of the benchmarks.
We also note that $\varphi_1$ and $\varphi_7$ contain \textrm{AT1} and a variant of \textrm{AT5} specifications in~\cite{DBLP:conf/cpsweek/ErnstADFMPYYZ19}.
Both of the specifications are falsified by \ga{} 10 times out of 10 trials.
We conducted the experiments on an Amazon EC2 c4.large instance (2 vCPUs and 3.75 GiB RAM).

\subsection{RQ1: Comparison with \breach}

In \cref{table:experiment_result}, we observe that on average, \ga{} falsified as many properties as $\breach$ does for any benchmark $\stlFml_i$.
\hc{} also falsified as many properties as $\breach$ does for any benchmark $\stlFml_i$ except for $\stlFml_1$.
Even for $\stlFml_1$, the number of the falsified properties of $\hc$ is comparable to that of $\breach$.
We also observe that $\random$ falsified as many properties as $\breach$ except for $\stlFml_1$, $\stlFml_{6,\mathrm{medium}}$, and $\stlFml_{6,\mathrm{large}}$.

One reason of the good performance of \ga{} and \hc{} is that 
the equivalence testing in these methods utilizes a \emph{discrete} optimization and tends to work well
even if the different part of the input sequence contributes to the robust semantics differently.
For example,\LongVersion{ in order} to falsify $\stlFml_4$, we have to find an input that makes the velocity high in the beginning and suddenly decreases the velocity at $28$ time units.
Such an optimization is not easy for \emph{continuous} optimization methods \eg{} CMA-ES. 

Another reason is that CMA-ES does not work well when the fitness function has very small slope.
For example, for the benchmark $\stlFml_2$, when the gear is not 3, the change of the robustness is almost discrete and the slope can be 0.
This is a difficult situation for many continuous optimization methods based on the slope.
Especially when the slope is too small, CMA-ES stops deeming there is no better inputs.
On the other hand, the behavior of the robustness-guided equivalence checking methods is much like the random search and it successfully falsified the specifications.

\subsection{RQ2: Best equivalence testing method}

In \cref{table:experiment_result}, we observe that on average, the number of the falsified properties of \ga{} is greater than or equal to that of \random{} and \hc.
Moreover, \ga{} has smaller standard deviation of the number of the properties than \random{} and \hc.
This is because \ga{} has a good balance of exploitation of exploration and the equivalence testing tends have a good performance constantly while
\random{} and \hc{} occasionally fails to find a counterexample in the equivalence testing.

\begin{table*}[tbp]
 \centering
 \caption{Result of falsification only using the extracted Mealy machine.
 The second column shows the number of the counterexamples found by model checking of the Mealy machines $\learnedModel$ extracted during the BBC.
 The third column shows the number of the actual counterexamples confirmed through a simulation of the CPS model $\actualModel$.
 The fourth and the fifth columns show the average and the standard deviation of the robustness, respectively.}
 \label{table:falsify_by_model_checking}
\small
 \begin{tabular}{c|c|c|c|c}
  STL formula $\stlFml$ & \# of $\stlFml \not\models\learnedModel$ & \# of $\stlFml \not\models\actualModel$ & Average of $\llbracket \stlFml \rrbracket$& std. dev. of $\llbracket \stlFml \rrbracket$\\\hline
  $\BoxOp{} (v < 90)$  	& 10& 5 &1.10                                       &1.94 \\ 
  $\BoxOp{[0,26]} (v < 90) \lor \BoxOp{[28,28]} (v > 40)$	&4  &0 &4.19 &0.00 \\ 
  $\BoxOp{[0,26]} (v < 90) \lor \BoxOp{[28,28]} (v > 50)$	&10 &0 &3.80 &0.60 \\ 
  $\BoxOp{[0,26]} (v < 90) \lor \BoxOp{[28,28]} (v > 60)$	&10 &0 &3.24 &0.76    
 \end{tabular}
\end{table*}

On the other hand, we also observe that $\ga$ tends not to be the fastest among the BBC methods.
This makes the number of the highlighted cells of $\ga$ smaller than that of $\hc$ and equal to that of $\random$ although $\ga$ falsified the largest number of properties.
This is because the genetic algorithm in \ga{} is more complicated than the hill climbing in $\hc$ and the random search in $\random$ while these simple optimization is enough for easy benchmarks.
However, even though $\ga$ is not the fastest BBC method, the additional time caused by $\ga$ is only a few minutes and it is acceptable for many practical usages.
Therefore, we conclude that $\ga$ performed the best among the three BBC methods.

\subsection{RQ3: Effectiveness to falsify multiple specifications}

\begin{figure}[tbp]
 \centering
 \scalebox{0.58}{\begin{tikzpicture}[gnuplot]
\tikzset{every node/.append style={font={\fontsize{15.0pt}{18.0pt}\selectfont}}}
\path (0.000,0.000) rectangle (12.500,8.750);
\gpcolor{color=gp lt color border}
\gpsetlinetype{gp lt border}
\gpsetdashtype{gp dt solid}
\gpsetlinewidth{1.00}
\draw[gp path] (1.704,1.478)--(1.884,1.478);
\draw[gp path] (11.671,1.478)--(11.491,1.478);
\node[gp node right,font={\huge}] at (1.428,1.478) {$0$};
\draw[gp path] (1.704,2.271)--(1.884,2.271);
\draw[gp path] (11.671,2.271)--(11.491,2.271);
\node[gp node right,font={\huge}] at (1.428,2.271) {$5$};
\draw[gp path] (1.704,3.065)--(1.884,3.065);
\draw[gp path] (11.671,3.065)--(11.491,3.065);
\node[gp node right,font={\huge}] at (1.428,3.065) {$10$};
\draw[gp path] (1.704,3.858)--(1.884,3.858);
\draw[gp path] (11.671,3.858)--(11.491,3.858);
\node[gp node right,font={\huge}] at (1.428,3.858) {$15$};
\draw[gp path] (1.704,4.652)--(1.884,4.652);
\draw[gp path] (11.671,4.652)--(11.491,4.652);
\node[gp node right,font={\huge}] at (1.428,4.652) {$20$};
\draw[gp path] (1.704,5.445)--(1.884,5.445);
\draw[gp path] (11.671,5.445)--(11.491,5.445);
\node[gp node right,font={\huge}] at (1.428,5.445) {$25$};
\draw[gp path] (1.704,6.238)--(1.884,6.238);
\draw[gp path] (11.671,6.238)--(11.491,6.238);
\node[gp node right,font={\huge}] at (1.428,6.238) {$30$};
\draw[gp path] (1.704,7.032)--(1.884,7.032);
\draw[gp path] (11.671,7.032)--(11.491,7.032);
\node[gp node right,font={\huge}] at (1.428,7.032) {$35$};
\draw[gp path] (1.704,7.825)--(1.884,7.825);
\draw[gp path] (11.671,7.825)--(11.491,7.825);
\node[gp node right,font={\huge}] at (1.428,7.825) {$40$};
\draw[gp path] (1.704,1.478)--(1.704,1.658);
\draw[gp path] (1.704,7.825)--(1.704,7.645);
\node[gp node center,font={\huge}] at (1.704,1.016) {$0$};
\draw[gp path] (3.128,1.478)--(3.128,1.658);
\draw[gp path] (3.128,7.825)--(3.128,7.645);
\node[gp node center,font={\huge}] at (3.128,1.016) {$5$};
\draw[gp path] (4.552,1.478)--(4.552,1.658);
\draw[gp path] (4.552,7.825)--(4.552,7.645);
\node[gp node center,font={\huge}] at (4.552,1.016) {$10$};
\draw[gp path] (5.976,1.478)--(5.976,1.658);
\draw[gp path] (5.976,7.825)--(5.976,7.645);
\node[gp node center,font={\huge}] at (5.976,1.016) {$15$};
\draw[gp path] (7.399,1.478)--(7.399,1.658);
\draw[gp path] (7.399,7.825)--(7.399,7.645);
\node[gp node center,font={\huge}] at (7.399,1.016) {$20$};
\draw[gp path] (8.823,1.478)--(8.823,1.658);
\draw[gp path] (8.823,7.825)--(8.823,7.645);
\node[gp node center,font={\huge}] at (8.823,1.016) {$25$};
\draw[gp path] (10.247,1.478)--(10.247,1.658);
\draw[gp path] (10.247,7.825)--(10.247,7.645);
\node[gp node center,font={\huge}] at (10.247,1.016) {$30$};
\draw[gp path] (11.671,1.478)--(11.671,1.658);
\draw[gp path] (11.671,7.825)--(11.671,7.645);
\node[gp node center,font={\huge}] at (11.671,1.016) {$35$};
\draw[gp path] (1.704,7.825)--(1.704,1.478)--(11.671,1.478)--(11.671,7.825)--cycle;
\node[gp node center,rotate=-270,font={\huge}] at (0.369,4.651) {The execution time [min.]};
\node[gp node center,font={\huge}] at (6.687,0.323) {The number of the falsified properites};
\node[gp node left,font={\large}] at (2.367,8.339) {\random\hspace{0mm}};
\gpcolor{rgb color={0.580,0.000,0.827}}
\gpsetlinewidth{5.00}
\draw[gp path] (1.635,8.339)--(2.091,8.339);
\draw[gp path] (2.274,1.833)--(2.843,1.951)--(3.085,7.825);
\draw[gp path] (6.560,7.825)--(6.830,1.852)--(10.532,2.365);
\gpsetpointsize{8.00}
\gppoint{gp mark 5}{(2.274,1.833)}
\gppoint{gp mark 5}{(2.843,1.951)}
\gppoint{gp mark 5}{(6.830,1.852)}
\gppoint{gp mark 5}{(10.532,2.365)}
\gppoint{gp mark 5}{(1.863,8.339)}
\gpcolor{color=gp lt color border}
\node[gp node left,font={\large}] at (5.031,8.339) {\hc};
\gpcolor{rgb color={0.000,0.620,0.451}}
\draw[gp path] (4.299,8.339)--(4.755,8.339);
\draw[gp path] (2.274,1.865)--(2.843,1.887)--(3.982,1.875)--(5.121,1.954)--(6.830,1.795)%
  --(10.532,3.406);
\gppoint{gp mark 7}{(2.274,1.865)}
\gppoint{gp mark 7}{(2.843,1.887)}
\gppoint{gp mark 7}{(3.982,1.875)}
\gppoint{gp mark 7}{(5.121,1.954)}
\gppoint{gp mark 7}{(6.830,1.795)}
\gppoint{gp mark 7}{(10.532,3.406)}
\gppoint{gp mark 7}{(4.527,8.339)}
\gpcolor{color=gp lt color border}
\node[gp node left,font={\large}] at (7.695,8.339) {\ga};
\gpcolor{rgb color={0.337,0.706,0.914}}
\draw[gp path] (6.963,8.339)--(7.419,8.339);
\draw[gp path] (2.274,2.040)--(2.843,1.986)--(3.982,2.124)--(5.121,2.029)--(6.830,1.987)%
  --(10.532,2.739);
\gppoint{gp mark 6}{(2.274,2.040)}
\gppoint{gp mark 6}{(2.843,1.986)}
\gppoint{gp mark 6}{(3.982,2.124)}
\gppoint{gp mark 6}{(5.121,2.029)}
\gppoint{gp mark 6}{(6.830,1.987)}
\gppoint{gp mark 6}{(10.532,2.739)}
\gppoint{gp mark 6}{(7.191,8.339)}
\gpcolor{color=gp lt color border}
\node[gp node left,font={\large}] at (10.359,8.339) {\breach};
\gpcolor{rgb color={0.902,0.624,0.000}}
\draw[gp path] (9.627,8.339)--(10.083,8.339);
\draw[gp path] (2.274,1.973)--(2.843,2.178)--(3.982,2.706)--(5.121,3.063)--(6.830,3.454)%
  --(10.532,7.238);
\gppoint{gp mark 4}{(2.274,1.973)}
\gppoint{gp mark 4}{(2.843,2.178)}
\gppoint{gp mark 4}{(3.982,2.706)}
\gppoint{gp mark 4}{(5.121,3.063)}
\gppoint{gp mark 4}{(6.830,3.454)}
\gppoint{gp mark 4}{(10.532,7.238)}
\gppoint{gp mark 4}{(9.855,8.339)}
\gpcolor{color=gp lt color border}
\gpsetlinewidth{1.00}
\draw[gp path] (1.704,7.825)--(1.704,1.478)--(11.671,1.478)--(11.671,7.825)--cycle;
\gpdefrectangularnode{gp plot 1}{\pgfpoint{1.704cm}{1.478cm}}{\pgfpoint{11.671cm}{7.825cm}}
\end{tikzpicture}
 \caption{The average of the number of the falsified properties and the time to falsify them [min.] for
 $\stlFml_{6,\mathrm{tiny}}$,
 $\stlFml_{6,\mathrm{small}}$,
 $\stlFml_{6,\mathrm{medium}}$, 
 $\stlFml_{6,\mathrm{large}}$,
 $\stlFml_{6,\mathrm{huge}}$, and
 $\stlFml_{6,\mathrm{gigantic}}$.}
 \label{figure:M3-result}
\end{figure}

\cref{figure:M3-result} shows the average of the number of the falsified properties and the time to falsify these properties for
 $\stlFml_{6,\mathrm{tiny}}$,
 $\stlFml_{6,\mathrm{small}}$,
 $\stlFml_{6,\mathrm{medium}}$, 
 $\stlFml_{6,\mathrm{large}}$,
 $\stlFml_{6,\mathrm{huge}}$, and
 $\stlFml_{6,\mathrm{gigantic}}$.
We observe that except for $\stlFml_{6,\mathrm{medium}}$ and $\stlFml_{6,\mathrm{large}}$ of $\hc$, the execution time of the BBC algorithms tends to be shorter than that of $\breach$.
Especially, for
 $\stlFml_{6,\mathrm{tiny}}$,
 $\stlFml_{6,\mathrm{small}}$,
 $\stlFml_{6,\mathrm{medium}}$, 
 $\stlFml_{6,\mathrm{large}}$, and
 $\stlFml_{6,\mathrm{huge}}$, we observe that the execution time of $\hc$ and $\ga$ is more or less constant while the execution time of $\breach$ increases linearly. 
 This is because in BBC, once we learn a sufficiently accurate Mealy machine $\learnedModel$, we often find counterexamples for several specifications immediately.
On the other hand, in $\breach$, each falsification trial is independent and the execution time increases linearly.
We note that the huge execution time of $\random$ for 
 $\stlFml_{6,\mathrm{small}}$ and $\stlFml_{6,\mathrm{medium}}$ is due to the outliers 
as the large standard deviations suggest.

\subsection{Discussion on the extracted Mealy machines}
\label{subsec:extracted_mealy_machine}
\begin{table}[tbp]
 \centering
 \caption{Average of the number of the states of the extracted Mealy machine}
 \footnotesize
 \begin{tabular}[t]{c||c|c|c}
 & \random & \hc & \ga\\ \hline\hline
 $\stlFml_1$&181.90&270.90&441.50\\ \hline
$\stlFml_2$&612.60&661.60&610.00\\ \hline
$\stlFml_3$&154.20&200.20&198.30\\ \hline
$\stlFml_4$&1372.70&1194.70&1353.30\\ \hline
$\stlFml_5$&948.60&1442.14&888.60\\ \hline
$\stlFml_{6,\mathrm{tiny}}$&26.60&32.10&35.20\\ \hline
$\stlFml_{6,\mathrm{small}}$&45.30&40.60&39.50\\ \hline
$\stlFml_{6,\mathrm{medium}}$&41.44&37.80&47.70\\ \hline
$\stlFml_{6,\mathrm{large}}$&32.89&48.70&44.80\\ \hline
$\stlFml_{6,\mathrm{huge}}$&41.20&36.70&44.80\\ \hline
$\stlFml_{6,\mathrm{gigantic}}$&1912.00&1714.40&1891.10\\ \hline
$\stlFml_7$&24.00&21.10&20.00
\end{tabular}

 \label{table:extracted_states}
\end{table}

One natural question on BBC is whether the extracted Mealy machine $\learnedModel$ is a good approximation of the original system $\actualModel$.
Especially, since the robustness-guided equivalence testing focuses on the inputs realizing \emph{low} robustness, 
it is unclear if the extracted Mealy machine $\learnedModel$ behaves similarly to the original system $\actualModel$ even for the inputs not realizing low robustness.
We note that as shown in \cref{table:extracted_states}, the extracted Mealy machines tend to be huge and a manual inspection is unrealistic.

In order to obtain insights on the aforementioned question, we conducted the following additional experiments.
\begin{ienumeration}
 \item For a Mealy machine $\learnedModel$ generated through BBC and an STL formula $\stlFml$ \emph{not} used when $\learnedModel$ is learned,
 we conducted model checking to obtain a witness $\icommand \in \ICommand$ of $\learnedModel \not\models \stlFml$. 
 We note that if we have $\learnedModel \models \stlFml$, we cannot obtain such $\icommand$.
 \item By feeding the generated witness $\icommand \in \ICommand$ to the original system $\actualModel$, we checked if $\icommand$ also witnesses $\actualModel \models \stlFml$.
 Precisely, we checked if we have $\overline{\OMapper} (\actualModel (\overline{\IMapper} (\icommand)) ) \models \stlFml$ by running a simulation.
\end{ienumeration}
As the Mealy machines, we used the 10 Mealy machines generated by $\ga$ with the benchmark $\stlFml_{6,\mathrm{gigantic}}$.
As the STL formulas, we used variants of the STL formulas in $\stlFml_1$ and $\stlFml_4$.

\cref{table:falsify_by_model_checking} shows the experiment result.
In the second column of \cref{table:falsify_by_model_checking}, we observe that we tend to be able to falsify the STL formula $\stlFml$ with respect to the extracted Mealy machine $\learnedModel$.
On the other hand, in the third column of \cref{table:falsify_by_model_checking}, we observe that the witness $\icommand \in \ICommand$ of $\stlFml \not \models\learnedModel$ is usually not a witness of $\stlFml \not \models\actualModel$.
This suggests that if we directly reuse a Mealy machine generated through BBC of different STL formulas, falsification does not perform well.
However, in the fourth column, we observe that the robustness is much smaller than the threshold in the STL formulas, and
 the witness $\icommand$ of $\stlFml \not \models\learnedModel$ actually witnesses ``near violation'' of $\stlFml \models\learnedModel$.
We note that this is not due to outliers as we observe the small standard deviation in the fifth column, 
Therefore, it seems that
the extracted Mealy machine $\learnedModel$ is not a very precise abstraction of the original system $\actualModel$, but we can potentially use $\learnedModel$ as a rough approximation of $\actualModel$.

\section{Related works} \label{section:related_works}

Black-box checking (BBC)~\cite{DBLP:journals/jalc/PeledVY02} (or learning-based testing (LBT)~\cite{DBLP:conf/dagstuhl/Meinke16}) is initially presented as a \emph{sound} black-box testing method utilizing
Vasilevskii and Chow (VC) algorithm~\cite{Vasilevskii1973,DBLP:journals/tse/Chow78} as the equivalence oracle.
The correctness of the VC algorithm relies on the upper bound of the size of the state space of the black-box system.
In~\cite{DBLP:journals/isse/MeijerP19}, B\"uchi acceptance condition in the state space of the black-box system is used for the sound equivalence checking.

A great effort has been devoted to a more practical direction of BBC, including the testing of automotive systems.
For example,  case studies on testing of automotive software systems are shown in~\cite{DBLP:conf/imbsa/KhosrowjerdiMR17} and
an application to the CPSs with continuous dynamics is presented in~\cite{DBLP:conf/epew/Meinke17,DBLP:conf/kbse/KhosrowjerdiM18}.
However, up to our knowledge, there is no work exploiting the quantitative satisfaction degree of the requirements in addition to Boolean satisfaction.
For BBC,
as far as we are aware of, two tools have been presented: LBTest~\cite{DBLP:conf/icst/MeinkeS13} and an implementation~\cite{DBLP:journals/isse/MeijerP19} in LearnLib~\cite{DBLP:conf/cav/IsbernerHS15}.
Our prototypical tool \ourtool{} relies on the implementation~\cite{DBLP:journals/isse/MeijerP19} in LearnLib.

Falsification is one of the well-known quality assurance methods of CPSs with two well-matured tools: \breach~\cite{DBLP:conf/cav/Donze10} and \staliro~\cite{DBLP:conf/tacas/AnnpureddyLFS11}.
Moreover, a friendly competition~\cite{DBLP:conf/cpsweek/2019arch} has been held every year since 2017.

Among many algorithms for falsification, only a few algorithms utilize model learning.
For example, in \cite{DBLP:journals/tecs/DeshmukhHJMP17}, for a CPS model $\actualModel$ and an STL formula $\stlFml$, a probabilistic model is constructed to approximate the function from an input signal $\signal$ to the robust semantics of $\stlFml$ over the output signal $\actualModel (\signal)$, and \emph{Bayesian optimization}~\cite{DBLP:journals/corr/abs-1012-2599} is used to make falsification efficient.
In~\cite{DBLP:conf/fm/AkazakiLYDH18},
\emph{deep reinforcement learning}~\cite{DBLP:journals/nature/MnihKSRVBGRFOPB15} is used for a similar optimization.
One drawback of these algorithms is that the learned model depends on the STL formula $\stlFml$, and it is (at least) not straightforward to apply for the falsification of multiple STL formulas.

In~\cite{DBLP:conf/cpsweek/KatoIH18}, reinforcement learning is used to falsify one specification for multiple but similar systems effectively.
We note that our BBC approach is also applicable for falsification of multiple but similar systems by \emph{adaptive model checking}~\cite{DBLP:journals/igpl/GrocePY06}.


\section{Conclusions and future work} \label{section:conclusions_and_future_work}

Combining optimization-based falsification and black-box checking (BBC), we presented robustness-guided BBC, which is a method to falsify multiple specifications efficiently.
Our main technical contribution is to use the robust semantics of STL to enhance the equivalence testing in active automata learning.
Our experiment results suggest that robustness-guided BBC by genetic algorithm ($\ga$) tends to outperform baseline algorithms of both\LongVersion{ optimization-based} falsification and BBC.
\LongVersion{Namely, we compared with $\breach$, which is one of the state-of-the-art falsification tools, and $\random$, which is a BBC method with random equivalence testing.}

One future direction is to reuse the extracted Mealy machine $\learnedModel$ for BBC over the STL formulas $\stlFml$ other than the formulas $\stlFml'$ examined when $\learnedModel$ is extracted.
As we observed in \cref{subsec:extracted_mealy_machine}, $\learnedModel$ may not be a good approximation of $\actualModel$ for falsification of $\stlFml$, but it seems $\learnedModel$ roughly captures the behavior of $\actualModel$.
Thus, we need to (hopefully only slightly) refine $\learnedModel$ to obtain a witness of $\actualModel \not\models \stlFml$.
When $\learnedModel \models \stlFml$ holds, we have to find an input to refine $\learnedModel$ by robustness-guided equivalence testing.
It is an interesting future work to make this robustness-guided equivalence testing efficient utilizing $\learnedModel$.
We note that when we have $\learnedModel \not\models \stlFml$, we can use the counterexample obtained by the model checking to $\learnedModel$.
It is also a future work to use $\learnedModel$ to explain why the BBC failed.


Another future direction is an efficient falsification method over a family of similar systems using \emph{adaptive model checking}~\cite{DBLP:journals/igpl/GrocePY06}.

It is also a future work to conduct further detailed experimental evaluation to compare with more tools by using the ARCH-COMP benchmark~\cite{DBLP:conf/cpsweek/ErnstADFMPYYZ19}, or to optimize some parameters.
For example, for the alphabet size, there should exist a trade-off between the computation cost and covering a larger class of signals. 
For the input mapper, we used a very simple input mapper in the explained as explained in \cref{section:experiments}.
Investigation of a good method to give an appropriate alphabets or an input mapper is future work.
It is also an interesting future work to use an optimization-based conformance testing of CPSs~\cite{DBLP:journals/tcps/RoehmOWA19} instead of our robustness-guided equivalence testing.
\begin{acks}
	\LongVersion{%
		This is the author (and extended) version of the manuscript of the same name published in the proceedings of the 23rd ACM International Conference on Hybrid Systems: Computation and Control (HSCC 2020).
	The final version is available at \url{dl.acm.org}.
	This version contains additional proofs.
	}%
        Thanks are due to Ichiro Hasuo for a useful feedback.
	This work is partially supported
	by JST ERATO HASUO Metamathematics for Systems Design Project (No.\ JPMJER1603) and
       by JSPS Grants-in-Aid No.\ 15KT0012 \& 18J22498.
\end{acks}

\newpage
	\newcommand{\LNCS}{Lecture Notes in Computer Science}

\bibliographystyle{ACM-Reference-Format}
\bibliography{dblp_refs}

\LongVersion{
\newpage
\appendix
\onecolumn
\section{Omitted proofs}

\subsection{Proof of \cref{theorem:sound_correct_robust_infinite}}

\begin{proof}
 [\cref{theorem:sound_correct_robust_infinite}]
 We prove by induction on the structure of $\stlFml$.

 When $\stlFml = \top$, we have
 $\robust{\top}{\signal}{k}) = +\infty > 0$ and
 $\satisfy{\signal}{k}{\top}$.

 When $\stlFml = y > c$, we have $\robust{\ovar > \constant}{\signal}{k} = \dval_k(\ovar) - \constant$.
 If we have $\robust{\ovar > \constant}{\signal}{k} > 0$, 
 we have $\dval_k(\ovar) - \constant > 0$, and 
 $\satisfy{\signal}{k}{\ovar > \constant}$ holds.
 If we have $\satisfy{\signal}{k}{\ovar > \constant}$,
 we have $\dval_k(\ovar) - \constant > 0$, and 
 $\robust{\ovar > \constant}{\signal}{k} \geq 0$ holds.

 When $\stlFml = \ovar < \constant$, we have $\robust{\ovar < \constant}{\signal}{k} = -\dval_k(\ovar) + \constant$.
 If we have $\robust{\ovar < \constant}{\signal}{k} > 0$, 
 we have $-\dval_k(\ovar) + \constant > 0$, and 
 $\satisfy{\signal}{k}{\ovar < \constant}$ holds.
 If we have $\satisfy{\signal}{k}{\ovar < \constant}$,
 we have $-\dval_k(\ovar) + \constant > 0$, and 
 $\robust{\ovar < \constant}{\signal}{k} \geq 0$ holds.

 When $\stlFml = \neg \stlFml'$, we have $\robust{\neg \stlFml'}{\signal}{k} = -\robust{\stlFml'}{\signal}{k}$. 
 If we have $\robust{\neg \stlFml'}{\signal}{k} > 0$,
 we have $\robust{\stlFml'}{\signal}{k} \leq 0$.
 Therefore, we have
 $\notsatisfy{\signal}{k}{\stlFml'}$ and 
 we have 
 $\satisfy{\signal}{k}{\neg\stlFml'}$.
 If we have $\satisfy{\signal}{k}{\neg\stlFml'}$, 
 we have 
 $\notsatisfy{\signal}{k}{\stlFml'}$.
 By induction hypothesis, we have
 $\satisfy{\signal}{k}{\stlFml'} < 0$
 and we have
 $\satisfy{\signal}{k}{\neg\stlFml'} \geq 0$.

 When $\stlFml = \stlFml' \lor \stlFml''$, we have 
 $\robust{\stlFml' \lor \stlFml''}{\signal}{k} = 
 \max\{
 \robust{\stlFml'}{\signal}{k},
 \robust{\stlFml''}{\signal}{k}\}$.
 If we have $\robust{\stlFml' \lor \stlFml'}{\signal}{k} > 0$,
 we have
 $\robust{\stlFml'}{\signal}{k} > 0$ or
 $\robust{\stlFml''}{\signal}{k} > 0$.
 By induction hypothesis, we have
 $\satisfy{\signal}{k}{\stlFml'}$ or
 $\satisfy{\signal}{k}{\stlFml''}$, and therefore,
 we have 
 $\satisfy{\signal}{k}{\stlFml' \lor \stlFml''}$.

 When $\stlFml = \NextOp \stlFml'$, we have
 $\robust{\NextOp\stlFml'}{\signal}{k} = \robust{\stlFml'}{\signal}{k+1}$.
 If we have $\robust{\NextOp \stlFml'}{\signal}{k} > 0$, 
 we have
 $\robust{\stlFml'}{\signal}{k+1} > 0$.
 By induction hypothesis, we have 
 $\satisfy{\signal}{k+1}{\stlFml'}$ and therefore, we have 
 $\satisfy{\signal}{k}{\NextOp \stlFml'}$.
 If we have 
 $\satisfy{\signal}{k}{\NextOp \stlFml'}$,
 we have
 $\satisfy{\signal}{k+1}{\stlFml'}$.
 By induction hypothesis, we have 
 $\robust{\stlFml'}{\signal}{k+1} \geq 0$, and therefore, we have
 $\robust{\NextOp \stlFml'}{\signal}{k} \geq 0$.

 When $\stlFml = \stlFml' \UntilOp{\interval} \stlFml''$,
 we have 
 $\robust{\stlFml \UntilOp{\interval} \stlFml'}{\signal}{k} = \sup_{l \in \interval[k + i,k + j]} \min\bigl(\robust{\stlFml'}{\signal}{l}, \min_{m \in \{k,k+1,\dots,l\}}\robust{\stlFml}{\signal}{m} \bigr)$.
 If we have $\robust{\stlFml \UntilOp{\interval} \stlFml'}{\signal}{k} > 0$, 
 there exists $l \in \interval[k + i,k + j]$ such that 
 we have
 $\robust{\stlFml'}{\signal}{l} > 0$ and 
 for any $m \in \{k,k+1,\dots,l\}$, we have
 $\robust{\stlFml}{\signal}{m} > 0$.
 By induction hypothesis, 
 there exists $l \in \interval[k + i,k + j]$ such that 
 we have
 $\satisfy{\signal}{l}{\stlFml'}$ and 
 for any $m \in \{k,k+1,\dots,l\}$, we have
 $\satisfy{\signal}{m}{\stlFml}$.
 Therefore, we have 
 $\satisfy{\signal}{k}{\stlFml \UntilOp{\interval} \stlFml'}$.
 If we have 
 $\satisfy{\signal}{k}{\stlFml \UntilOp{\interval} \stlFml'}$,
 there exists $l \in \interval[k + i,k + j]$ such that 
 we have
 $\satisfy{\signal}{l}{\stlFml'}$ and 
 for any $m \in \{k,k+1,\dots,l\}$, we have
 $\satisfy{\signal}{m}{\stlFml}$.
 By induction hypothesis, 
 there exists $l \in \interval[k + i,k + j]$ such that 
 we have
 $\robust{\stlFml'}{\signal}{l} \geq 0$ and 
 for any $m \in \{k,k+1,\dots,l\}$, we have
 $\robust{\stlFml}{\signal}{m} \geq 0$.
 Therefore,
 we have $\robust{\stlFml \UntilOp{\interval} \stlFml'}{\signal}{k} \geq 0$, 
\end{proof}

\subsection{Proof of \cref{theorem:sound_correct_robust_finite}}

First, we prove the following lemma.

\begin{lemma}
 \label{lemma:robust_neg}
 For an STL formula $\stlFml$ over $\OVar$, a \emph{finite} signal $\signalWithInside \in \FinOSignal$ over $\OVar$, and $k \in \N$ we have the following.  
 \begin{align*}
  \signal \not\in \infSem{\stlFml} &\iff \signal \in \supSem{\neg\stlFml}\\
  \signal \not\in \supSem{\stlFml} &\iff \signal \in \infSem{\neg\stlFml}
 \end{align*}
\end{lemma}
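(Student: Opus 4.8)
The plan is to prove both equivalences by directly unfolding the definitions of $\infSem{\stlFml}$ and $\supSem{\stlFml}$ and then exploiting the duality between the $\forall$ and $\exists$ quantifiers together with the negation clause of the STL satisfaction relation. No induction on the structure of $\stlFml$ is needed; the whole argument is purely propositional, and the two equivalences are dual to one another.

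For the first equivalence, I would start from $\signal \not\in \infSem{\stlFml}$, which by the definition of $\infSem{\cdot}$ unfolds to $\neg\bigl(\forall \signal' \in \InfOSignal.\, \signal \cdot \signal' \models \stlFml\bigr)$. Pushing the negation through the universal quantifier rewrites this as $\exists \signal' \in \InfOSignal.\, \signal \cdot \signal' \not\models \stlFml$. The key step is then to turn $\signal \cdot \signal' \not\models \stlFml$ into $\signal \cdot \signal' \models \neg\stlFml$: this is exactly the negation clause $\satisfy{\signal \cdot \signal'}{0}{\neg\stlFml} \iff \notsatisfy{\signal \cdot \signal'}{0}{\stlFml}$ of the satisfaction relation, instantiated at position $0$. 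Refolding the resulting statement $\exists \signal' \in \InfOSignal.\, \signal \cdot \signal' \models \neg\stlFml$ against the definition of $\supSem{\cdot}$ yields $\signal \in \supSem{\neg\stlFml}$, as required.

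The second equivalence is entirely dual: beginning from $\signal \not\in \supSem{\stlFml}$, I would unfold to $\neg\bigl(\exists \signal' \in \InfOSignal.\, \signal \cdot \signal' \models \stlFml\bigr)$, push the negation through the existential to obtain $\forall \signal' \in \InfOSignal.\, \signal \cdot \signal' \not\models \stlFml$, apply the same negation clause pointwise, and refold against the definition of $\infSem{\cdot}$ to conclude $\signal \in \infSem{\neg\stlFml}$.

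I expect no real obstacle, since every step is a routine quantifier manipulation. The only points worth a word of care are that the concatenation $\signal \cdot \signal'$ of a finite signal with an infinite one is again an infinite signal, so the satisfaction relation applies to it and the negation clause is legitimately available; and that the parameter $k$ named in the statement plays no role here, as both $\infSem{\cdot}$ and $\supSem{\cdot}$ are defined purely through satisfaction at position $0$.
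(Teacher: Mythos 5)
Your proof is correct and follows exactly the same route as the paper's: unfold the definitions of $\infSem{\cdot}$ and $\supSem{\cdot}$, push the negation through the quantifier, and convert $\signal\cdot\signal' \not\models \stlFml$ into $\signal\cdot\signal' \models \neg\stlFml$ via the negation clause of the satisfaction relation. Your added remarks (that $\signal\cdot\signal'$ is a legitimate infinite signal and that $k$ is vacuous in the statement) are accurate but not needed beyond what the paper already does.
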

\begin{proof}
 The first part is proved as follows.
 \begin{align*}
  &\signal \not\in \infSem{\stlFml}\\ 
  \iff&
  \neg
  (\forall \signal'\in\InfOSignal.\,
  \signal \cdot \signal' \models \stlFml)\\
  \iff&
  \exists \signal'\in\InfOSignal.\,
  \signal \cdot \signal' \not\models \stlFml\\
  \iff&
  \exists \signal'\in\InfOSignal.\,
  \signal \cdot \signal' \models \neg\stlFml\\
  \iff&
  \signal \in \supSem{\neg\stlFml}
 \end{align*}

 The second part is proved as follows.
 \begin{align*}
  &\signal \not\in \supSem{\stlFml}\\ 
  \iff&
  \neg
  (\exists \signal'\in\InfOSignal.\,
  \signal \cdot \signal' \models \stlFml)\\
  \iff&
  \forall \signal'\in\InfOSignal.\,
  \signal \cdot \signal' \not\models \stlFml\\
  \iff&
  \forall \signal'\in\InfOSignal.\,
  \signal \cdot \signal' \models \neg\stlFml\\
  \iff&
  \signal \in \infSem{\neg\stlFml}
 \end{align*}
\end{proof}

\cref{theorem:sound_correct_robust_finite} is proved as follows.

\begin{proof}
 [\cref{theorem:sound_correct_robust_finite}]
 We prove by induction on the structure of $\stlFml$.

 When $\stlFml = \top$, we have
 $\sup(\RoSI{\top}{\signal}{k}) = \inf(\RoSI{\top}{\signal}{k}) = +\infty > 0$ and
 $\subseq{\signal}{k}{\abs{\signal} - 1} \in \FinOSignal = \supSem{\top} = \infSem{\top}$.

 When $\stlFml = y > c$, we have the following.
 \begin{align*}
  \sup(\RoSI{\ovar > \constant}{\signal}{k}) &=\sup_{\signal' \in \InfOSignal} \robust{\ovar > \constant}{\signal \cdot \signal'}{k} = 
  \begin{cases}
   \dval_k(\ovar) - \constant & \text{if $\abs{\signal} > k$}\\
   +\infty & \text{if $\abs{\signal} \leq k$}
  \end{cases}\\
  \inf(\RoSI{\ovar > \constant}{\signal}{k}) &=\inf_{\signal' \in \InfOSignal} \robust{\ovar > \constant}{\signal \cdot \signal'}{k} = 
  \begin{cases}
   \dval_k(\ovar) - \constant & \text{if $\abs{\signal} > k$}\\
   -\infty & \text{if $\abs{\signal} \leq k$}
  \end{cases}
 \end{align*}

 If we have $\sup(\RoSI{\ovar > \constant}{\signal}{k}) > 0$, we have $\dval_k(\ovar) > \constant$ or $\abs{\signal} \leq k$, and
 we have $\subseq{\signal}{k}{\abs{\signal}} \in \supSem{\ovar > \constant}$.

 If we have $\subseq{\signal}{k}{\abs{\signal}} \in \supSem{\ovar > \constant}$, 
 we have $\dval_k(\ovar) > \constant$ or $\abs{\signal} \leq k$, and thus,
 we have $\sup(\RoSI{\ovar > \constant}{\signal}{k}) \geq 0$.

 If we have $\inf(\RoSI{\ovar > \constant}{\signal}{k}) > 0$, we have $\abs{\signal} > k$ and $\dval_k(\ovar) > \constant$, and
 we have $\subseq{\signal}{k}{\abs{\signal}} \in \infSem{\ovar > \constant}$.

 If we have $\subseq{\signal}{k}{\abs{\signal}} \in \infSem{\ovar > \constant}$, 
 we have $\abs{\signal} > k$ and $\dval_k(\ovar) > \constant$, and thus,
 we have $\inf(\RoSI{\ovar > \constant}{\signal}{k}) \geq 0$.

  When $\stlFml = y < c$, we have the following.
 \begin{align*}
  \sup(\RoSI{\ovar < \constant}{\signal}{k}) &=\sup_{\signal' \in \InfOSignal} \robust{\ovar < \constant}{\signal \cdot \signal'}{k} = 
  \begin{cases}
   -\dval_k(\ovar) + \constant & \text{if $\abs{\signal} > k$}\\
   +\infty & \text{if $\abs{\signal} \leq k$}
  \end{cases}\\
  \inf(\RoSI{\ovar < \constant}{\signal}{k}) &=\inf_{\signal' \in \InfOSignal} \robust{\ovar < \constant}{\signal \cdot \signal'}{k} = 
  \begin{cases}
   -\dval_k(\ovar) + \constant & \text{if $\abs{\signal} > k$}\\
   -\infty & \text{if $\abs{\signal} \leq k$}
  \end{cases}
 \end{align*}

 If we have $\sup(\RoSI{\ovar < \constant}{\signal}{k}) > 0$, we have $\dval_k(\ovar) < \constant$ or $\abs{\signal} \leq k$, and
 we have $\subseq{\signal}{k}{\abs{\signal}} \in \supSem{\ovar < \constant}$.

 If we have $\subseq{\signal}{k}{\abs{\signal}} \in \supSem{\ovar < \constant}$, 
 we have $\dval_k(\ovar) < \constant$ or $\abs{\signal} \leq k$, and thus,
 we have $\sup(\RoSI{\ovar < \constant}{\signal}{k}) \geq 0$.

 If we have $\inf(\RoSI{\ovar < \constant}{\signal}{k}) > 0$, we have $\abs{\signal} > k$ and $\dval_k(\ovar) < \constant$, and
 we have $\subseq{\signal}{k}{\abs{\signal}} \in \infSem{\ovar < \constant}$.

 If we have $\subseq{\signal}{k}{\abs{\signal}} \in \infSem{\ovar < \constant}$, 
 we have $\abs{\signal} > k$ and $\dval_k(\ovar) < \constant$, and thus,
 we have $\inf(\RoSI{\ovar < \constant}{\signal}{k}) \geq 0$.

 When $\stlFml = \neg \stlFml'$, we have the following. 
 \begin{align*}
  \sup(\RoSI{\neg \stlFml'}{\signal}{k}) 
  =  \sup_{\signal' \in \InfOSignal} \robust{\neg\stlFml'}{\signal \cdot \signal}{k}
  &= \sup_{\signal' \in \InfOSignal} -\robust{\stlFml'}{\signal \cdot \signal}{k}\\
  &=  - \inf_{\signal' \in \InfOSignal} \robust{\stlFml'}{\signal \cdot \signal}{k}\\
  \inf(\RoSI{\neg \stlFml'}{\signal}{k}) 
  =  \inf_{\signal' \in \InfOSignal} \robust{\neg\stlFml'}{\signal \cdot \signal}{k}
  &= \inf_{\signal' \in \InfOSignal} -\robust{\stlFml'}{\signal \cdot \signal}{k}\\
  &=  - \sup_{\signal' \in \InfOSignal} \robust{\stlFml'}{\signal \cdot \signal}{k}
 \end{align*}

 If we have $\sup(\RoSI{\neg \stlFml'}{\signal}{k}) > 0$, 
 we have $\inf(\RoSI{\stlFml'}{\signal}{k}) < 0$.
 By induction hypothesis,
 we have 
 $\subseq{\signal}{k}{\abs{\signal}} \not\in \infSem{\stlFml'}$. 
 By \cref{lemma:robust_neg},
 we have $\subseq{\signal}{k}{\abs{\signal}} \in \supSem{\neg \stlFml'}$.

 If we have $\subseq{\signal}{k}{\abs{\signal}} \in \supSem{\neg \stlFml'}$,
 by \cref{lemma:robust_neg}, we have 
 $\subseq{\signal}{k}{\abs{\signal}} \not\in \infSem{\stlFml'}$. 
 By induction hypothesis,
 we have $\inf(\RoSI{\stlFml'}{\signal}{k}) < 0$, and thus, we have
 $\sup(\RoSI{\neg \stlFml'}{\signal}{k}) \geq 0$.

 If we have $\inf(\RoSI{\neg \stlFml'}{\signal}{k}) > 0$, 
 we have $\sup(\RoSI{\stlFml'}{\signal}{k}) < 0$.
 By induction hypothesis,
 we have 
 $\subseq{\signal}{k}{\abs{\signal}} \not\in \supSem{\stlFml'}$ and by \cref{lemma:robust_neg}.
 By \cref{lemma:robust_neg},
 we have $\subseq{\signal}{k}{\abs{\signal}} \in \infSem{\neg \stlFml'}$.

 If we have $\subseq{\signal}{k}{\abs{\signal}} \in \infSem{\neg \stlFml'}$,
 by \cref{lemma:robust_neg}, we have 
 $\subseq{\signal}{k}{\abs{\signal}} \not\in \supSem{\stlFml'}$. 
 By induction hypothesis, we have $\sup(\RoSI{\stlFml'}{\signal}{k}) < 0$, and thus, we have
 $\inf(\RoSI{\neg \stlFml'}{\signal}{k}) \geq 0$.

 When $\stlFml = \stlFml' \lor \stlFml''$, we have the following.
 \begin{align*}
  \sup(\RoSI{\stlFml' \lor \stlFml''}{\signal}{k}) 
  &= \sup_{\signal' \in \InfOSignal} \robust{\stlFml' \lor \stlFml''}{\signal\cdot \signal'}{k}\\
  &= \sup_{\signal' \in \InfOSignal} \max\bigl\{
  \robust{\stlFml' }{\signal\cdot \signal'}{k},
  \robust{\stlFml''}{\signal\cdot \signal'}{k}
  \bigr\}\\
  \inf(\RoSI{\stlFml' \lor \stlFml''}{\signal}{k}) 
  &= \inf_{\signal' \in \InfOSignal} \robust{\stlFml' \lor \stlFml''}{\signal\cdot \signal'}{k}\\
  &= \inf_{\signal' \in \InfOSignal} \max\bigl\{
  \robust{\stlFml' }{\signal\cdot \signal'}{k},
  \robust{\stlFml''}{\signal\cdot \signal'}{k}
  \bigr\}
 \end{align*}

 If we have $\sup(\RoSI{\stlFml' \lor \stlFml''}{\signal}{k}) > 0$, 
 there exists $\signal' \in \InfOSignal$ satisfying
 $\robust{\stlFml'}{\signal \cdot \signal'}{k} > 0$ or
 $\robust{\stlFml''}{\signal \cdot \signal'}{k} > 0$.
 By~\cref{theorem:sound_correct_robust_infinite}
 there exists $\signal' \in \InfOSignal$ satisfying
 $\satisfy{\signal \cdot \signal'}{k}{\stlFml'}$ or
 $\satisfy{\signal \cdot \signal'}{k}{\stlFml''}$, and thus,
 we have 
 $\subseq{\signal}{k}{\abs{\signal}} \in \supSem{\stlFml' \lor \stlFml''}$.

 If we have $\subseq{\signal}{k}{\abs{\signal}} \in \supSem{\stlFml' \lor \stlFml''}$,
 there exists $\signal' \in \InfOSignal$ satisfying
 $\satisfy{\signal \cdot \signal'}{k}{\stlFml' \lor \stlFml''}$.
 By~\cref{theorem:sound_correct_robust_infinite}
 there exists $\signal' \in \InfOSignal$ satisfying
 $\robust{\stlFml'}{\signal \cdot \signal'}{k} \geq 0$ or
 $\robust{\stlFml''}{\signal \cdot \signal'}{k} \geq 0$, and thus,
 we have $\sup(\RoSI{\stlFml' \lor \stlFml''}{\signal}{k}) \geq 0$.

 If we have $\inf(\RoSI{\stlFml' \lor \stlFml''}{\signal}{k}) > 0$, 
 for any $\signal' \in \InfOSignal$, we have
 $\robust{\stlFml'}{\signal \cdot \signal'}{k} > 0$ or
 $\robust{\stlFml''}{\signal \cdot \signal'}{k} > 0$.
 By~\cref{theorem:sound_correct_robust_infinite}
 for any $\signal' \in \InfOSignal$, we have
 $\satisfy{\signal \cdot \signal'}{k}{\stlFml'}$ or
 $\satisfy{\signal \cdot \signal'}{k}{\stlFml''}$.
 Therefore, we have 
 $\subseq{\signal}{k}{\abs{\signal}} \in \infSem{\stlFml' \lor \stlFml''}$.

 If we have $\subseq{\signal}{k}{\abs{\signal}} \in \infSem{\stlFml' \lor \stlFml''}$,
 for any $\signal' \in \InfOSignal$, we have
 $\satisfy{\signal \cdot \signal'}{k}{\stlFml' \lor \stlFml''}$.
 By~\cref{theorem:sound_correct_robust_infinite}
 for any $\signal' \in \InfOSignal$, we have
 $\robust{\stlFml' }{\signal \cdot \signal'}{k} \geq 0$ or 
 $\robust{\stlFml''}{\signal \cdot \signal'}{k} \geq 0$.
 Therefore, we have $\inf(\RoSI{\stlFml' \lor \stlFml''}{\signal}{k}) \geq 0$.

 When $\stlFml = \NextOp \stlFml'$, we have the following. 
 \begin{align*}
  \sup(\RoSI{\NextOp \stlFml'}{\signal}{k}) 
  =  \sup_{\signal' \in \InfOSignal} \robust{\NextOp \stlFml'}{\signal \cdot \signal}{k}
  &= \sup_{\signal' \in \InfOSignal} \robust{\stlFml'}{\signal \cdot \signal}{k+1}\\
  &= \sup(\RoSI{\stlFml'}{\signal}{k+1})\\
  \inf(\RoSI{\NextOp \stlFml'}{\signal}{k}) 
  =  \inf_{\signal' \in \InfOSignal} \robust{\NextOp \stlFml'}{\signal \cdot \signal}{k}
  &= \inf_{\signal' \in \InfOSignal} \robust{\stlFml'}{\signal \cdot \signal}{k+1}\\
  &= \inf(\RoSI{\stlFml'}{\signal}{k+1})
 \end{align*}

 If we have $\sup(\RoSI{\NextOp \stlFml'}{\signal}{k}) = \sup(\RoSI{\stlFml'}{\signal}{k+1}) > 0$, 
 we have
 $\subseq{\signal}{k+1}{\abs{\signal}} \in \supSem{\stlFml'}$ and 
 $\subseq{\signal}{k}{\abs{\signal}} \in \supSem{\NextOp \stlFml'}$.

 If we have $\subseq{\signal}{k}{\abs{\signal}} \in \supSem{\NextOp \stlFml'}$,
 we have $\subseq{\signal}{k+1}{\abs{\signal}} \in \supSem{\stlFml'}$,
 therefore, 
 we have
 $\sup(\RoSI{\NextOp \stlFml'}{\signal}{k}) = \sup(\RoSI{\stlFml'}{\signal}{k+1}) \geq 0$ and

 If we have $\inf(\RoSI{\NextOp \stlFml'}{\signal}{k}) = \inf(\RoSI{\stlFml'}{\signal}{k+1}) > 0$, 
 we have 
 $\subseq{\signal}{k+1}{\abs{\signal}} \in \infSem{\stlFml'}$ and 
 $\subseq{\signal}{k}{\abs{\signal}} \in \infSem{\NextOp \stlFml'}$.

 If we have $\subseq{\signal}{k}{\abs{\signal}} \in \infSem{\NextOp \stlFml'}$,
 we have $\subseq{\signal}{k+1}{\abs{\signal}} \in \infSem{\stlFml'}$,
 therefore, 
 we have
 $\inf(\RoSI{\NextOp \stlFml'}{\signal}{k}) = \inf(\RoSI{\stlFml'}{\signal}{k+1}) \geq 0$.

 When $\stlFml = \stlFml' \UntilOp{\interval} \stlFml''$, we have the following.
 \begin{align*}
  &\sup(\RoSI{\stlFml' \UntilOp{\interval} \stlFml''}{\signal}{k}) \\
  =& \sup_{\signal' \in \InfOSignal}\sup_{l \in \interval[k + i,k + j]} \min\bigl(\robust{\stlFml''}{\signal \cdot \signal'}{l}, \min_{m \in \{k,k+1,\dots,l\}}\robust{\stlFml'}{\signal \cdot \signal'}{m} \bigr)\\
  &\inf(\RoSI{\stlFml' \UntilOp{\interval} \stlFml''}{\signal}{k}) \\
  =& \inf_{\signal' \in \InfOSignal}\sup_{l \in \interval[k + i,k + j]} \min\bigl(\robust{\stlFml''}{\signal \cdot \signal'}{l}, \min_{m \in \{k,k+1,\dots,l\}}\robust{\stlFml'}{\signal \cdot \signal'}{m} \bigr)
 \end{align*}

 If we have $\sup(\RoSI{\stlFml' \UntilOp{\interval} \stlFml''}{\signal}{k}) > 0$, 
 there exist $\signal' \in \InfOSignal$ and $l \in \interval[k + i,k + j]$ such that
 we have 
 $\robust{\stlFml''}{\signal \cdot \signal'}{l} > 0$ and
 for any $m \in \{k,k+1,\dots,l\}$,
 we have $\robust{\stlFml'}{\signal \cdot \signal'}{m} > 0$.
 By~\cref{theorem:sound_correct_robust_infinite},
 there exist $\signal' \in \InfOSignal$ and $l \in \interval[k + i,k + j]$ such that
 we have 
 $\satisfy{\signal \cdot \signal'}{l}{\stlFml''}$ and
 for any $m \in \{k,k+1,\dots,l\}$,
 we have $\satisfy{\signal \cdot \signal'}{m}{\stlFml'}$.
 Therefore, 
 there exist $\signal' \in \InfOSignal$ and $l \in \interval[k + i,k + j]$ satisfying
 $\satisfy{\signal \cdot \signal'}{k}{\stlFml' \UntilOp{\interval} \stlFml''}$ and we have
 $\subseq{\signal}{k}{\abs{\signal}} \in \supSem{\stlFml' \UntilOp{\interval} \stlFml''}$

 If we have 
 $\subseq{\signal}{k}{\abs{\signal}} \in \supSem{\stlFml' \UntilOp{\interval} \stlFml''}$,
 there exist $\signal' \in \InfOSignal$ and $l \in \interval[k + i,k + j]$ satisfying
 $\satisfy{\signal \cdot \signal'}{k}{\stlFml' \UntilOp{\interval} \stlFml''}$ and we have
 $\subseq{\signal}{k}{\abs{\signal}} \in \supSem{\stlFml' \UntilOp{\interval} \stlFml''}$.
 Therefore,
 there exist $\signal' \in \InfOSignal$ and $l \in \interval[k + i,k + j]$ such that
 we have 
 $\satisfy{\signal \cdot \signal'}{l}{\stlFml''}$ and
 for any $m \in \{k,k+1,\dots,l\}$,
 we have $\satisfy{\signal \cdot \signal'}{m}{\stlFml'}$.
 By~\cref{theorem:sound_correct_robust_infinite},
 there exist $\signal' \in \InfOSignal$ and $l \in \interval[k + i,k + j]$ such that
 we have 
 $\robust{\stlFml''}{\signal \cdot \signal'}{l} \geq 0$ and
 for any $m \in \{k,k+1,\dots,l\}$,
 we have $\robust{\stlFml'}{\signal \cdot \signal'}{m} \geq 0$, and thus,
 we have $\sup(\RoSI{\stlFml' \UntilOp{\interval} \stlFml''}{\signal}{k}) \geq 0$.

 If we have $\inf(\RoSI{\stlFml' \UntilOp{\interval} \stlFml''}{\signal}{k}) > 0$, 
 for any $\signal' \in \InfOSignal$, there exists $l \in \interval[k + i,k + j]$ such that
 we have 
 $\robust{\stlFml''}{\signal \cdot \signal'}{l} > 0$ and
 for any $m \in \{k,k+1,\dots,l\}$,
 we have $\robust{\stlFml'}{\signal \cdot \signal'}{m} > 0$.
 By~\cref{theorem:sound_correct_robust_infinite},
 for any $\signal' \in \InfOSignal$, there exists $l \in \interval[k + i,k + j]$ such that
 we have 
 $\satisfy{\signal \cdot \signal'}{l}{\stlFml''}$ and
 for any $m \in \{k,k+1,\dots,l\}$,
 we have $\satisfy{\signal \cdot \signal'}{m}{\stlFml'}$.
 Therefore, 
 for any $\signal' \in \InfOSignal$, there exists $l \in \interval[k + i,k + j]$ satisfying
 $\satisfy{\signal \cdot \signal'}{k}{\stlFml' \UntilOp{\interval} \stlFml''}$ and we have
 $\subseq{\signal}{k}{\abs{\signal}} \in \infSem{\stlFml' \UntilOp{\interval} \stlFml''}$

 If we have 
 $\subseq{\signal}{k}{\abs{\signal}} \in \infSem{\stlFml' \UntilOp{\interval} \stlFml''}$,
 for any $\signal' \in \InfOSignal$, there exists $l \in \interval[k + i,k + j]$ satisfying
 $\satisfy{\signal \cdot \signal'}{k}{\stlFml' \UntilOp{\interval} \stlFml''}$
 Therefore,
 for any $\signal' \in \InfOSignal$, there exists $l \in \interval[k + i,k + j]$ such that
 we have 
 $\satisfy{\signal \cdot \signal'}{l}{\stlFml''}$ and
 for any $m \in \{k,k+1,\dots,l\}$,
 we have $\satisfy{\signal \cdot \signal'}{m}{\stlFml'}$.
 By~\cref{theorem:sound_correct_robust_infinite},
 for any $\signal' \in \InfOSignal$, there exists $l \in \interval[k + i,k + j]$ such that
 we have 
 $\robust{\stlFml''}{\signal \cdot \signal'}{l} \geq 0$ and
 for any $m \in \{k,k+1,\dots,l\}$,
 we have $\robust{\stlFml'}{\signal \cdot \signal'}{m} \geq 0$.
 Thus,
 we have $\inf(\RoSI{\stlFml' \UntilOp{\interval} \stlFml''}{\signal}{k}) \geq 0$.
\end{proof}

\subsection{Proof of \cref{theorem:robust_finite_computation}}

\begin{proof}
 [\cref{theorem:robust_finite_computation}]
 Since both $\finRobust{\stlFml}{\signal}{k}$ and $\RoSI{\stlFml}{\signal}{k}$ are nonempty closed intervals, we have 
 $\RoSI{\stlFml}{\signal}{k} \subseteq \finRobust{\stlFml}{\signal}{k}$ if and only if
 we have
 $\inf(\finRobust{\stlFml}{\signal}{k}) \leq \inf(\RoSI{\stlFml}{\signal}{k}) \leq
 \sup(\RoSI{\stlFml}{\signal}{k}) \leq \sup(\finRobust{\stlFml}{\signal}{k})$.
 We prove the theorem by induction on the structure of $\stlFml$.

 When $\stlFml = \top$, we have
 $\sup(\RoSI{\top}{\signal}{k}) = \inf(\RoSI{\top}{\signal}{k}) = +\infty$ and
 $\finRobust{\stlFml}{\signal}{k} = [+\infty,+\infty]$. 
 Therefore, we have $\finRobust{\top}{\signal}{k} = \RoSI{\top}{\signal}{k}$.

 When $\stlFml = y > c$, we have the following.
 \begin{align*}
  \sup(\RoSI{\ovar > \constant}{\signal}{k}) &=\sup_{\signal' \in \InfOSignal} \robust{\ovar > \constant}{\signal \cdot \signal'}{k} = 
  \begin{cases}
   \dval_k(\ovar) - \constant & \text{if $\abs{\signal} > k$}\\
   +\infty & \text{if $\abs{\signal} \leq k$}
  \end{cases}\\
  \inf(\RoSI{\ovar > \constant}{\signal}{k}) &=\inf_{\signal' \in \InfOSignal} \robust{\ovar > \constant}{\signal \cdot \signal'}{k} = 
  \begin{cases}
   \dval_k(\ovar) - \constant & \text{if $\abs{\signal} > k$}\\
   -\infty & \text{if $\abs{\signal} \leq k$}
  \end{cases}
 \end{align*}
 Therefore, we have $\finRobust{\ovar > \constant}{\signal}{k} = \RoSI{\ovar > \constant}{\signal}{k}$.

 When $\stlFml = \ovar < \constant$, we have the following.
 \begin{align*}
  \sup(\RoSI{\ovar < \constant}{\signal}{k}) &=\sup_{\signal' \in \InfOSignal} \robust{\ovar < \constant}{\signal \cdot \signal'}{k} = 
  \begin{cases}
   -\dval_k(\ovar) + \constant & \text{if $\abs{\signal} > k$}\\
   +\infty & \text{if $\abs{\signal} \leq k$}
  \end{cases}\\
  \inf(\RoSI{\ovar < \constant}{\signal}{k}) &=\inf_{\signal' \in \InfOSignal} \robust{\ovar < \constant}{\signal \cdot \signal'}{k} = 
  \begin{cases}
   -\dval_k(\ovar) + \constant & \text{if $\abs{\signal} > k$}\\
   -\infty & \text{if $\abs{\signal} \leq k$}
  \end{cases}
 \end{align*}
 Therefore, we have $\finRobust{\ovar < \constant}{\signal}{k} = \RoSI{\ovar < \constant}{\signal}{k}$.

 When $\stlFml = \neg \stlFml'$, we have the following. 
 \begin{align*}
  \sup(\RoSI{\neg \stlFml'}{\signal}{k}) 
  =  \sup_{\signal' \in \InfOSignal} \robust{\neg\stlFml'}{\signal \cdot \signal}{k}
  &= \sup_{\signal' \in \InfOSignal} -\robust{\stlFml'}{\signal \cdot \signal}{k}\\
  &=  - \inf_{\signal' \in \InfOSignal} \robust{\stlFml'}{\signal \cdot \signal}{k}\\
  &=  - \inf(\RoSI{\stlFml'}{\signal}{k})\\
  \inf(\RoSI{\neg \stlFml'}{\signal}{k}) 
  =  \inf_{\signal' \in \InfOSignal} \robust{\neg\stlFml'}{\signal \cdot \signal}{k}
  &= \inf_{\signal' \in \InfOSignal} -\robust{\stlFml'}{\signal \cdot \signal}{k}\\
  &=  - \sup_{\signal' \in \InfOSignal} \robust{\stlFml'}{\signal \cdot \signal}{k}\\
  &=  - \sup(\RoSI{\stlFml'}{\signal}{k})
 \end{align*}
 Therefore, we have the following.
 \begin{displaymath}
  \RoSI{\neg\stlFml'}{\signal}{k} = -\RoSI{\stlFml'}{\signal}{k} \subseteq 
  -\finRobust{\stlFml'}{\signal}{k} = \finRobust{\neg\stlFml'}{\signal}{k}
 \end{displaymath}

 When $\stlFml = \stlFml' \lor \stlFml''$, we have the following.
 \begin{align*}
  \sup(\RoSI{\stlFml' \lor \stlFml''}{\signal}{k}) 
  &= \sup_{\signal' \in \InfOSignal} \robust{\stlFml' \lor \stlFml''}{\signal\cdot \signal'}{k}\\
  &= \sup_{\signal' \in \InfOSignal} \max\bigl\{
  \robust{\stlFml' }{\signal\cdot \signal'}{k},
  \robust{\stlFml''}{\signal\cdot \signal'}{k}
  \bigr\}\\
 &= \max\bigl\{
 \sup_{\signal' \in \InfOSignal} \robust{\stlFml' }{\signal\cdot \signal'}{k},
 \sup_{\signal' \in \InfOSignal} \robust{\stlFml''}{\signal\cdot \signal'}{k}
 \bigr\}\\
 &= \max\bigl\{
 \sup(\RoSI{\stlFml' }{\signal}{k}),
 \sup(\RoSI{\stlFml''}{\signal}{k})
 \bigr\}\\
 &\leq \max\bigl\{
 \sup(\finRobust{\stlFml' }{\signal}{k}),
 \sup(\finRobust{\stlFml''}{\signal}{k})
 \bigr\}\\
  &=\sup(\finRobust{\stlFml' \lor \stlFml''}{\signal}{k})\\
  \inf(\RoSI{\stlFml' \lor \stlFml''}{\signal}{k}) 
  &= \inf_{\signal' \in \InfOSignal} \robust{\stlFml' \lor \stlFml''}{\signal\cdot \signal'}{k}\\
  &= \inf_{\signal' \in \InfOSignal} \max\bigl\{
  \robust{\stlFml' }{\signal\cdot \signal'}{k},
  \robust{\stlFml''}{\signal\cdot \signal'}{k}
  \bigr\}\\
 &\geq \max\bigl\{
 \inf_{\signal' \in \InfOSignal} \robust{\stlFml' }{\signal\cdot \signal'}{k},
 \inf_{\signal' \in \InfOSignal} \robust{\stlFml''}{\signal\cdot \signal'}{k}
 \bigr\}\\
 &= \max\bigl\{
 \inf(\RoSI{\stlFml' }{\signal}{k}),
 \inf(\RoSI{\stlFml''}{\signal}{k})
 \bigr\}\\
 &\geq \max\bigl\{
 \inf(\finRobust{\stlFml' }{\signal}{k}),
 \inf(\finRobust{\stlFml''}{\signal}{k})
 \bigr\}\\
 &= \inf(\finRobust{\stlFml' \lor \stlFml''}{\signal}{k})
 \end{align*}
 Therefore, we have
 $\RoSI{\stlFml' \lor \stlFml''}{\signal}{k} \subseteq \finRobust{\stlFml' \lor \stlFml''}{\signal}{k}$.

 When $\stlFml = \NextOp \stlFml'$, we have the following. 
 \begin{align*}
  \sup(\RoSI{\NextOp \stlFml'}{\signal}{k}) 
  =  \sup_{\signal' \in \InfOSignal} \robust{\NextOp \stlFml'}{\signal \cdot \signal}{k}
  &= \sup_{\signal' \in \InfOSignal} \robust{\stlFml'}{\signal \cdot \signal}{k+1}\\
  &= \sup(\RoSI{\stlFml'}{\signal}{k+1})\\
  \inf(\RoSI{\NextOp \stlFml'}{\signal}{k}) 
  =  \inf_{\signal' \in \InfOSignal} \robust{\NextOp \stlFml'}{\signal \cdot \signal}{k}
  &= \inf_{\signal' \in \InfOSignal} \robust{\stlFml'}{\signal \cdot \signal}{k+1}\\
  &= \inf(\RoSI{\stlFml'}{\signal}{k+1})
 \end{align*}
 Therefore, we have the following.
 \begin{displaymath}
  \RoSI{\NextOp\stlFml'}{\signal}{k} = \RoSI{\stlFml'}{\signal}{k+1} \subseteq 
  \finRobust{\stlFml'}{\signal}{k+1} = \finRobust{\NextOp\stlFml'}{\signal}{k}
 \end{displaymath}

 When $\stlFml = \stlFml' \UntilOp{\interval} \stlFml''$, we have the following.
 \begin{align*}
  &\sup(\RoSI{\stlFml' \UntilOp{\interval} \stlFml''}{\signal}{k}) \\
  =& \sup_{\signal' \in \InfOSignal}\sup_{l \in \interval[k + i,k + j]} \min\bigl(\robust{\stlFml''}{\signal \cdot \signal'}{l}, \min_{m \in \{k,k+1,\dots,l\}}\robust{\stlFml'}{\signal \cdot \signal'}{m} \bigr)\\
  =& \sup_{l \in \interval[k + i,k + j]} \sup_{\signal' \in \InfOSignal}\min\bigl(\robust{\stlFml''}{\signal \cdot \signal'}{l}, \min_{m \in \{k,k+1,\dots,l\}}\robust{\stlFml'}{\signal \cdot \signal'}{m} \bigr)\\
  \leq& \sup_{l \in \interval[k + i,k + j]} \min\bigl(\sup_{\signal' \in \InfOSignal}\robust{\stlFml''}{\signal \cdot \signal'}{l}, \sup_{\signal' \in \InfOSignal}\min_{m \in \{k,k+1,\dots,l\}}\robust{\stlFml'}{\signal \cdot \signal'}{m} \bigr)\\
  \leq& \sup_{l \in \interval[k + i,k + j]} \min\bigl(\sup_{\signal' \in \InfOSignal}\robust{\stlFml''}{\signal \cdot \signal'}{l}, \min_{m \in \{k,k+1,\dots,l\}}\sup_{\signal' \in \InfOSignal}\robust{\stlFml'}{\signal \cdot \signal'}{m} \bigr)\\
  =& \sup_{l \in \interval[k + i,k + j]} \min\bigl(\sup(\finRobust{\stlFml''}{\signal}{l}), \min_{m \in \{k,k+1,\dots,l\}}\sup(\finRobust{\stlFml'}{\signal}{m}) \bigr)\\
  =& \sup(\finRobust{\stlFml' \UntilOp{\interval} \stlFml''}{\signal}{k}) \\
  &\inf(\RoSI{\stlFml' \UntilOp{\interval} \stlFml''}{\signal}{k}) \\
  =& \inf_{\signal' \in \InfOSignal}\sup_{l \in \interval[k + i,k + j]} \min\bigl(\robust{\stlFml''}{\signal \cdot \signal'}{l}, \min_{m \in \{k,k+1,\dots,l\}}\robust{\stlFml'}{\signal \cdot \signal'}{m} \bigr)\\
  \geq& \sup_{l \in \interval[k + i,k + j]}\inf_{\signal' \in \InfOSignal} \min\bigl(\robust{\stlFml''}{\signal \cdot \signal'}{l}, \min_{m \in \{k,k+1,\dots,l\}}\robust{\stlFml'}{\signal \cdot \signal'}{m} \bigr)\\
  =& \sup_{l \in \interval[k + i,k + j]} \min\bigl(\inf_{\signal' \in \InfOSignal}\robust{\stlFml''}{\signal \cdot \signal'}{l}, \inf_{\signal' \in \InfOSignal}\min_{m \in \{k,k+1,\dots,l\}}\robust{\stlFml'}{\signal \cdot \signal'}{m} \bigr)\\
  =& \sup_{l \in \interval[k + i,k + j]} \min\bigl(\inf_{\signal' \in \InfOSignal}\robust{\stlFml''}{\signal \cdot \signal'}{l}, \min_{m \in \{k,k+1,\dots,l\}}\inf_{\signal' \in \InfOSignal}\robust{\stlFml'}{\signal \cdot \signal'}{m} \bigr)\\
  =& \sup_{l \in \interval[k + i,k + j]} \min\bigl(\inf(\finRobust{\stlFml''}{\signal}{l}), \min_{m \in \{k,k+1,\dots,l\}}\inf(\finRobust{\stlFml'}{\signal}{m}) \bigr)\\
  =& \inf(\finRobust{\stlFml' \UntilOp{\interval} \stlFml''}{\signal}{k})
 \end{align*}
 Therefore, we have $\RoSI{\stlFml' \UntilOp{\interval} \stlFml''}{\signal}{k} \subseteq \finRobust{\stlFml' \UntilOp{\interval} \stlFml''}{\signal}{k}$.
\end{proof}

\section{Omitted Experiment Result}
\cref{table:result_ratio} shows the ratio of the time to falsify as many properties as \breach{}.

\begin{table}[tb]
 \caption{The ratio of the time to falsify as many properties as \breach{}. There is no entry for $\varphi_4$ because the number of falsified properties by \breach{} was not constant. The cells with N/A show that the method could not falsify as many properties as \breach{}.}
 \label{table:result_ratio}
 \begin{tabular}[t]{c|c|c|c}
&\breach/\random&\breach/\hc&\breach/\ga\\\hline
$\varphi_{1}$ &N/A&N/A&0.174747\\
$\varphi_{2}$ &0.0657534&0.0468933&0.0426136\\
$\varphi_{3}$ &2.40107&1.76962&1.74259\\
$\varphi_{5}$ &3.32821&3.43027&2.70304\\
$\varphi_{6,\mathrm{tiny}}$ &1.39079&1.27869&0.880527\\
$\varphi_{6,\mathrm{small}}$ &1.48069&1.7071&1.38028\\
$\varphi_{6,\mathrm{medium}}$ &N/A&3.10013&1.90094\\
$\varphi_{6,\mathrm{large}}$ &N/A&3.33556&2.88035\\
$\varphi_{6,\mathrm{huge}}$ &5.2717&6.225&3.87649\\
$\varphi_{6,\mathrm{gigantic}}$ &6.4918&2.98888&4.56795\\
$\varphi_7$ &0.363057&0.36248&0.24333\\
 \end{tabular}
\end{table}}
\end{document}